\newtheorem{lemma}{Lemma}
\newtheorem{theorem}{Theorem}
\newtheorem{proposition}{Proposition}
\newtheorem{corollary}{Corollary}
\newtheorem{definition}{Definition}
\newcommand{\rw}{{\mathtt{rw}}}
\newcommand{\cw}{{\mathtt{cw}}}
\newcommand{\tw}{{\mathtt{tw}}}
\newcommand{\pw}{{\mathtt{pw}}}
\newcommand{\nd}{{\mathtt{nd}}}
\newcommand{\tc}{{\mathtt{tc}}}
\newcommand{\dc}{{\mathtt{cd}}}
\newcommand{\db}{{\mathtt{bd}}}
\newcommand{\mw}{{\mathtt{mw}}}
\newcommand{\vc}{{\mathtt{vc}}}
\newcommand{\dhd}{{\mathtt{dhd}}}
\newcommand{\cod}{{\mathtt{cod}}}
\newcommand{\rankw}{{\mathtt{rw}}}
\newcommand{\dd}{{\mathtt{d}}}
\newcommand{\TT}{{\mathcal T}}
\newcommand{\XX}{{\mathcal X}}
\newcommand{\EE}{{\mathcal E}}
\newcommand{\revised}[1]{\textcolor{black}{{#1}}}
\newcounter{mpproblem}[section]
\newenvironment{mpproblem}[1]%
{%
    \protected@edef\@currentlabelname{#1}%
    \par\vspace{\baselineskip}\noindent%
    \ifx#1\empty %
    \else \refstepcounter{mpproblem}$($#1$)$ %
    \fi%
    \hfill%
    $\left|%
    \hfill%
    \hspace{0.00\textwidth}%
    \@fleqntrue\@mathmargin\parindent%
    \begin{minipage}{0.86\textwidth}%
    \vspace{-1.0\baselineskip}%
}%
{%
    \end{minipage}%
    \@fleqnfalse%
    \right.$%
    \par\vspace{\baselineskip}\noindent%
    \ignorespacesafterend%
}%
\newenvironment{mpproblem*}%
{%
    \begin{mpproblem}{}%
}%
{%
    \end{mpproblem}%
    \ignorespacesafterend%
}
\title{Computing \textsc{Densest $k$-Subgraph} with Structural Parameters\thanks{This work is partially supported by JSPS KAKENHI Grant Number JP19K21537, JP21H05852, JP21K17707, JP22H00513.}}
\author[1]{Tesshu Hanaka\thanks{\texttt{hanaka@inf.kyushu-u.ac.jp}}}
\affil[1]{Kyushu University, Fukuoka, Japan}
\date{}
\begin{document}

\maketitle

\begin{abstract}
\textsc{Densest $k$-Subgraph} is the problem to find a vertex subset $S$ of size $k$ such that the number of edges in the subgraph induced by $S$ is maximized.
In this paper, we show that \textsc{Densest $k$-Subgraph} is fixed parameter tractable when parameterized by neighborhood diversity, block deletion number, distance-hereditary deletion number, and cograph deletion number, respectively.
Furthermore, we give a $2$-approximation $2^{\tc(G)/2}n^{O(1)}$-time algorithm where $\tc(G)$ is the twin cover number of an input graph $G$. 
%This improves a $3$-approximation $2^{\vc/2}n^{O(1)}$-time algorithm proposed by Bourgeois et al. [{\em European Journal of Operational Research}, 262(3):894--903, 2017] where $\vc$ is the vertex cover number of an input graph.

\end{abstract}

\section{Introduction}\label{sec:intro}
Finding a dense subgraph is an important topic in graph mining.
There are many applications for real problems such as community detection in social networks \cite{Dourisboure2007},  \revised{spam detection \cite{Gibson2005}, and identification of molecular complexes in protein-protein interaction networks \cite{Bader2003}.}
In this paper, we study the \textsc{Densest $k$-Subgraph} problem, which is a graph optimization problem to find a dense structure in a graph.
More precisely, the problem is to find a vertex subset $S$ of size $k$ such that the number of edges in the subgraph induced by $S$ is maximized.

The \textsc{Densest $k$-Subgraph} problem is NP-hard due to the NP-hardness of \textsc{$k$-Clique}.
Thus, several papers study the parameterized complexity of \textsc{Densest $k$-Subgraph}.
In \cite{Bourgeois2017}, Bourgeois et al. show that \textsc{Densest $k$-Subgraph} can be solved in time $2^{\tw(G)}n^{O(1)}$  and exponential space, and in time $2^{\vc(G)}n^{O(1)}$ and polynomial space, respectively, where $\tw(G)$ and $\vc(G)$ are the tree-width and the vertex cover number of \revised{the} input graph $G$.
\revised{Here, if a problem has an algorithm with running time $f(p)n^{O(1)}$ where $f$ is some computable function, the problem is said to be {\em fixed-parameter tractable (FPT)} with respect to $p$.}
%If $k$ is small, the problem can be solved efficiently \cite{Cygan2015}.}
Therefore, the problem is FPT when parameterized by tree-width and vertex cover number, respectively.
Furthermore, it can be solved in time $2^{O(\cw(G) \log k)}n^{O(1)}$ where $\cw(G)$ is the clique-width of \revised{the} input graph  \cite{Broersma2013}.

On the other hand, \textsc{Densest $k$-Subgraph} is W[1]-hard when parameterized by $k$ due to the W[1]-completeness of \textsc{$k$-Clique}. Thus, the problem is unlikely to have any algorithm with running time $f(k)n^{O(1)}$.
Also, \textsc{Densest $k$-Subgraph} is W[1]-hard with respect to clique-width $\cw(G)$ and cannot be solved in time $2^{o(\cw(G) \log k)}n^{O(1)}$  unless Exponential Time Hypothesis (ETH) fails \cite{Broersma2013}.

In the \textsc{Densest $k$-Subgraph} problem, we seek a dense subgraph in a graph. 
Although the fixed-parameter algorithms with respect to tree-width or vertex cover number are already proposed, such parameters might be large if a graph has a dense subgraph.
In fact, if a graph contains a clique of size $k$, then the tree-width is at least $k-1$.
Thus, it is natural to consider structural parameters for graphs having dense sub-structures.

%\textsc{Sparsest $k$-Subgraph} is a generalization of \textsc{Independent Set}, and thus it is W[1]-hard with respect to $k$.
%even on planar cubic graphs \cite{GJ1979,Watrigant2016}. Furthermore, the problem is NP-hard even on bipartite graphs \cite{Apollonio2014} and chordal graphs \cite{Watrigant2016}, whereas it is solvable on cographs and split graphs \cite{Broersma2013}.

%In this paper, we also show that \textsc{Sparsest $k$-Subgraph} is fixed-parameter tractable when parameterized by neighborhood diversity.
% \begin{corollary}\label{cor:SkS}
% \textsc{Sparsest $k$-Subgraph} is fixed-parameter tractable when parameterized by neighborhood diversity and cluster deletion number, respectively.
%  In particular, given a block graph deletion set, it can be computed in time $O(2^{\db} + (k^2+\db)n + m)$ where $\db$ is the block deletion number of $G$.
%  \end{corollary}

\subsection{Our contribution}
\begin{figure}[tbp]
    \centering
    \includegraphics[width=7cm]{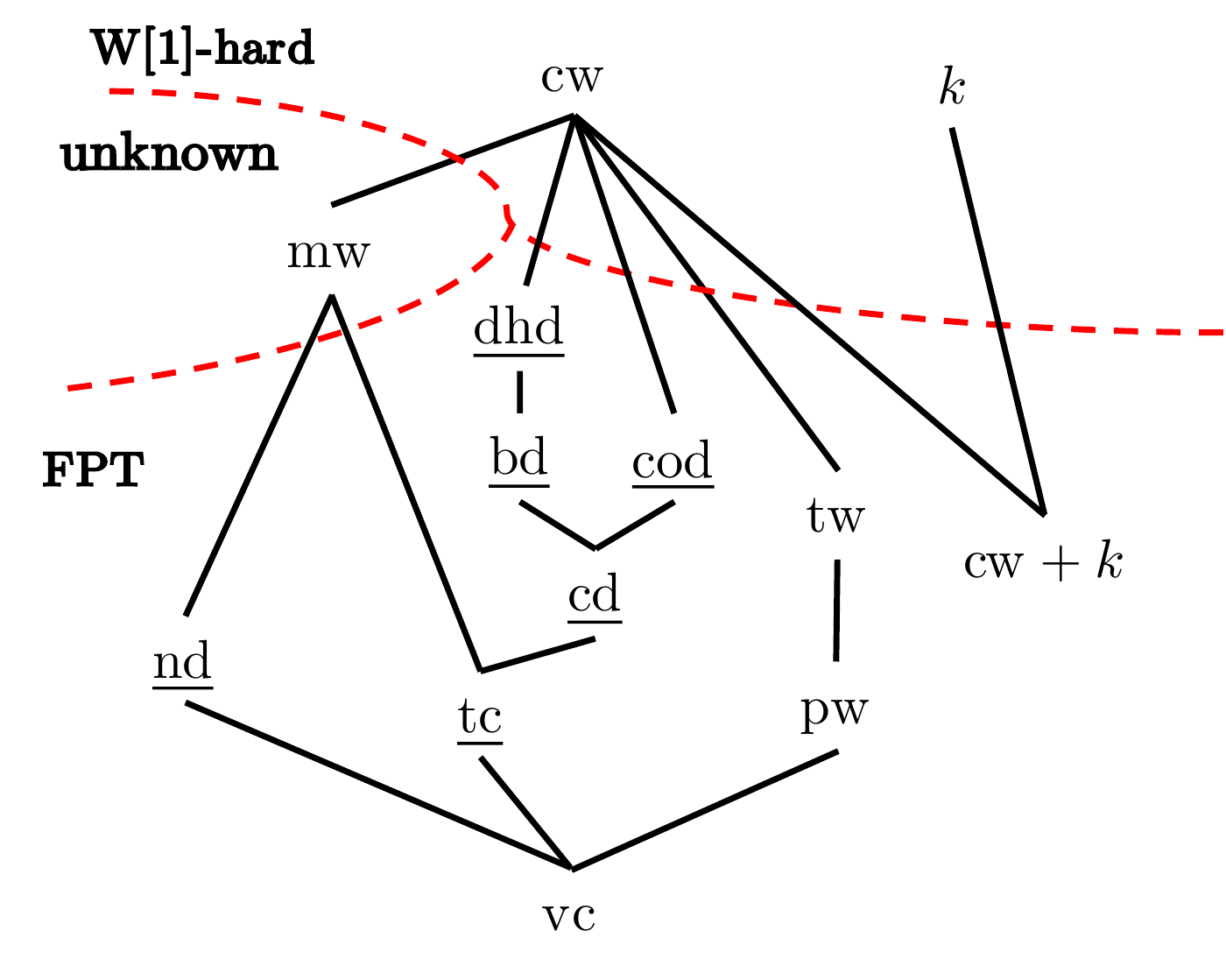}
    \caption{The relationship between graph parameters and the parameterized complexity of \textsc{Densest (Sparsest) $k$-Subgraph}. The parameters $\cw, \mw, \nd, \dhd, \db, \cod, \dc, \tc, \tw, \pw$, and $\vc$ mean clique-width, modular-width, neighborhood diversity, distance-hereditary deletion number, block deletion number, cograph deletion number, cluster deletion number, twin cover number, tree-width, path-width, and vertex cover number, respectively.
    Connections between two parameters imply that the above one is bounded by some function in the below one.
    The parameters with underlines are new results in this paper.}
    \label{fig:structural_parameter}
\end{figure}

In this paper, we study the fixed-parameter tractability of \textsc{Densest $k$-Subgraph} and \textsc{Sparsest $k$-Subgraph} by using structural parameters between clique-width and vertex cover number.
\textsc{Sparsest $k$-Subgraph} is the dual problem of \textsc{Densest $k$-Subgraph}.
The problem is to find a vertex subset $S$ of size $k$ such that the number of edges in the subgraph induced by $S$ is minimized. Clearly, \textsc{Sparsest $k$-Subgraph} is equivalent to \textsc{Densest $k$-Subgraph} on the complement of an input graph.
By using two structural parameters: {\em neighborhood diversity} and {\em block deletion number}, we show the following theorems.
\begin{theorem}\label{thm:DkS:nd}
%\textsc{Densest $k$-Subgraph} and  \textsc{Sparsest $k$-Subgraph}  are fixed-parameter tractable when parameterized by the neighborhood diversity.
\revised{\textsc{Densest $k$-Subgraph} and  \textsc{Sparsest $k$-Subgraph}  can be solved in time $f(\nd(G))n^{O(1)}$ where $\nd(G)$ is the neighborhood diversity of the input graph.}
\end{theorem}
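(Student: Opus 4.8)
The plan is to translate the problem, via the neighborhood partition, into maximizing a fixed quadratic form over the integer points of a bounded polytope, and to crack that optimization with a structural lemma about where an optimum can be non-extreme. Concretely, I would first compute in polynomial time (e.g.\ by modular decomposition) a neighborhood partition $V_1,\dots,V_d$ with $d=\nd(G)$, the \emph{type graph} $T$ on $[d]$ (with $\{i,j\}\in E(T)$ iff $V_i$ and $V_j$ are completely joined), and the list of which $V_i$ are cliques. Since vertices of a class are interchangeable, a solution $S$ is determined by $\s=(s_1,\dots,s_d)$, $s_i=|S\cap V_i|$, and the induced edge count equals
\[
  h(\s)=\sum_{i:\,V_i\text{ clique}}\binom{s_i}{2}+\sum_{\{i,j\}\in E(T)}s_is_j .
\]
Thus \textsc{Densest $k$-Subgraph} is exactly ``maximize $h(\s)$ subject to $0\le s_i\le|V_i|$, $\sum_i s_i=k$, $\s$ integral''. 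As \textsc{Sparsest $k$-Subgraph} on $G$ is \textsc{Densest $k$-Subgraph} on $\overline{G}$, whose neighborhood diversity is again at most $\nd(G)$ (the same partition works, swapping cliques and independent sets and complementing $E(T)$), handling \textsc{Densest $k$-Subgraph} suffices.

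The crux is a lemma I would prove by a local exchange argument: \emph{there is an optimal $\s$ in which $F:=\{i:0<s_i<|V_i|\}$ forms a clique of $T$ and contains at most one clique-type.} Take an optimal $\s$ minimizing $|F|$ and let $p,q\in F$. Moving a unit of weight between $s_p$ and $s_q$ (keeping $s_p+s_q$ and all other coordinates fixed), $h$ is a quadratic in $s_p$ with leading coefficient $\tfrac12\big([V_p\text{ clique}]+[V_q\text{ clique}]\big)-[\{p,q\}\in E(T)]$. If this is nonnegative, $h$ is convex or linear along the segment, hence non-decreasing towards one endpoint of the feasible range of $s_p$; pushing there drives $s_p$ or $s_q$ to a bound, contradicting minimality of $|F|$ (or, if $h$ strictly increases, optimality of $\s$). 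Hence the coefficient is negative for every pair in $F$, which forces $\{p,q\}\in E(T)$ and forbids both $V_p,V_q$ from being cliques, giving the lemma.

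The algorithm then enumerates each subset $F\subseteq[d]$ that is a clique of $T$ with at most one clique-type, each assignment of the classes outside $F$ to ``take all of $V_i$'' or ``take none'', and --- if $F$ contains a clique-type $t$ --- each value $s_t\in\{0,\dots,|V_t|\}$. This fixes all coordinates off $F':=F\setminus\{t\}$, hence the residual budget $m'$ for $F'$ and, for each $j\in F'$, the number $\gamma_j$ of already-chosen vertices adjacent to $V_j$. As the classes of $F'$ are pairwise-joined independent sets, $\sum_{\{j,\ell\}\subseteq F'}s_js_\ell=\tfrac12\big((m')^2-\sum_{j\in F'}s_j^2\big)$, so maximizing $h$ over $F'$ reduces to maximizing $-\tfrac12\sum_{j\in F'}s_j^2+\sum_{j\in F'}\gamma_j s_j$ subject to $s_j\in\{0,\dots,|V_j|\}$ and $\sum_{j\in F'}s_j=m'$ --- a separable concave resource-allocation problem, solvable in polynomial time by the classical incremental greedy (repeatedly increase the coordinate of currently smallest marginal loss, respecting upper bounds). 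Reporting the best $h(\s)$ over the at most $3^{\nd(G)}(n+1)$ guesses, with a witness chosen class by class, gives running time $2^{O(\nd(G))}n^{O(1)}$.

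The step I expect to be the real obstacle is the structural lemma: one must check carefully that in the exchange step the endpoints of the feasible interval for $s_p$ always coincide with $s_p$ or $s_q$ attaining a bound, and must accommodate the single clique-type possibly inside $F$, whose $\binom{s_t}{2}$ term is convex and so cannot be absorbed into the concave residual problem --- hence it is guessed separately, at the cost of only a factor $n+1$. The remaining pieces (the $h(\s)$ reformulation, the $3^d$-size enumeration, and the separable convex subproblem) are routine.
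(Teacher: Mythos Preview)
Your proposal is correct and takes a genuinely different route from the paper. The paper simply writes down the same reformulation you give---variables $s_i=|S\cap V_i|$, objective $h(\s)$, constraints $\sum_i s_i=k$ and $0\le s_i\le |V_i|$---and then invokes Lokshtanov's theorem that \textsc{Integer Quadratic Programming} is FPT in the number of variables and the largest coefficient. Since there are $\nd(G)$ variables and all coefficients are $\pm 1$, this yields $f(\nd(G))\,n^{O(1)}$ for some unspecified $f$ coming from the IQP black box.

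Your argument avoids the IQP hammer entirely. The exchange lemma (an optimal $\s$ can be chosen so that $F=\{i:0<s_i<|V_i|\}$ is a clique of the type graph containing at most one clique-type) is correct: the quadratic coefficient along the $s_p$--$s_q$ segment is exactly $\tfrac12([V_p\text{ clique}]+[V_q\text{ clique}])-[\{p,q\}\in E(T)]$, and whenever this is nonnegative one can slide to an endpoint of the feasible interval, where necessarily one of $s_p,s_q$ hits a bound, shrinking $|F|$. The residual problem over $F'$ is a genuine separable concave integer allocation (all $V_j$ with $j\in F'$ are independent-type and pairwise joined, so the cross term becomes $\tfrac12((m')^2-\sum_j s_j^2)$), and the incremental greedy is standard. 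The one clique-type coordinate $s_t$ is handled by brute enumeration at cost $n+1$, which is exactly the right workaround for its convex $\binom{s_t}{2}$ contribution.

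What each approach buys: the paper's proof is a two-line reduction but inherits an unspecified and rather large $f$ from the IQP algorithm. Your proof is self-contained, elementary, and yields an explicit single-exponential bound $3^{\nd(G)}\,n^{O(1)}$. In fact the paper's conclusion explicitly asks ``whether there is a faster algorithm parameterized by neighborhood diversity without going through quadratic integer program''---your argument answers that question.
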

\begin{theorem}\label{thm:DkS:db}
%\textsc{Densest $k$-Subgraph} and \textsc{Sparsest $k$-Subgraph}  are fixed-parameter tractable when parameterized by the block deletion number. In particular, given a block graph deletion set of size $\db(G)$, they can be computed in time $O(2^{\db(G)}( (k^3+\db(G))n + m))$.
\revised{Given a block graph deletion set of size $\db(G)$, \textsc{Densest $k$-Subgraph} and \textsc{Sparsest $k$-Subgraph}  can be computed in time $O(2^{\db(G)}( (k^3+\db(G))n + m))$.}
\end{theorem}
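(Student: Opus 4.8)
The plan is to reduce, by exhaustive guessing over the deletion set, to a vertex‑weighted version of \textsc{Densest $k$‑Subgraph} on block graphs, and then to solve the latter by dynamic programming on the block‑cut tree.

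\emph{Reduction.} Let $D$ be the given block graph deletion set, $|D|=\db(G)$, so $H:=G-D$ is a block graph. For every $D'\subseteq D$ we look for the best solution $S$ with $S\cap D=D'$; writing $T:=S\setminus D'\subseteq V(H)$ and splitting $e(G[S])$ into edges inside $D'$, edges between $D'$ and $T$, and edges inside $T$, the first summand is a constant $c(D')$, the second equals $w(T):=\sum_{v\in T}w(v)$ with $w(v):=|N_G(v)\cap D'|$, and the third equals $e(H[T])$ since $H[T]=G[T]$. Thus, after the guess, the task is: choose $T\subseteq V(H)$ with $|T|=k-|D'|$ maximizing $e(H[T])+w(T)$ (for \textsc{Sparsest $k$‑Subgraph}, minimizing it; note we must not complement the graph, as $\db$ is not complementation‑invariant). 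There are $2^{\db(G)}$ guesses, each contributing $c(D')$ and the weights in $O(\db(G)n+m)$ time, and the block‑cut tree of $H$ is computed once in $O(n+m)$ time.

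\emph{The dynamic program.} In a block graph each edge lies in a unique block, which is a clique, so $e(H[T])=\sum_{B}\binom{|T\cap B|}{2}$ over blocks $B$. Root the block‑cut tree of each component of $H$ at a block. Fix the bookkeeping convention that the edges inside a block $B$ are charged to the node $B$, while $w(v)$ is charged to the node representing $v$ (the block node if $v$ is non‑cut, the cut‑vertex node otherwise). For a block $B$ with parent cut vertex $p_B$ (absent at a root), let $\beta_B[j][b]$ be the best value of $\sum_{B'}\binom{|T\cap B'|}{2}+w(T)$ restricted to the subtree of $B$, over $T$ with exactly $j$ chosen vertices among those owned by the subtree (all its vertices except $p_B$) and $b=[p_B\in T]$, with $w(p_B)$ excluded. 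For a cut vertex $c$ let $\gamma_c[j][b]$ be the analogue for the subtree of $c$, with $b=[c\in T]$, $j$ the number of chosen vertices strictly below $c$, and $w(c)$ included iff $b=1$. Treating a non‑cut vertex of $B$ as a cut vertex with empty subtree, $\gamma_c[j][b]=\max_{\sum_i j_i=j}\sum_i\beta_{B_i}[j_i][b]+b\cdot w(c)$ over the child blocks $B_i$ of $c$ (a one‑dimensional knapsack in $j$). For a block $B$ one builds an auxiliary table $\Phi[m][j]$ — best value when $m$ of the new (non‑$p_B$) vertices of $B$ are chosen and $j$ vertices of the subtree are chosen — by inserting the children $c_i$ of $B$ one at a time, each either ``not chosen'' (contributing $(0,t)$ with value $\gamma_{c_i}[t][0]$) or ``chosen'' (contributing $(1,t+1)$ with value $\gamma_{c_i}[t][1]$); finally $\beta_B[j][b]=\max_{m\le j}\big(\binom{m+b}{2}+\Phi[m][j]\big)$, where the binomial term is exactly the edges of $H[T]$ inside $B$. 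Components are merged by a last knapsack in $j$, and the optimum for $D'$ is $c(D')$ plus the best value at $j=k-|D'|$.

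\emph{Running time and the main obstacle.} Each table $\beta_B,\gamma_c$ has $O(k)$ entries and each $\Phi$ has $O(k^2)$ entries; since the index $m$ never exceeds the subtree‑count index $j\le k$, the table stays $O(k^2)$. Inserting a child into $\gamma_c$ or into $\Phi$ costs $O(k)$ per entry (a convolution over $t$), i.e.\ $O(k^3)$ per insertion, and a collapse $\Phi\to\beta_B$ costs $O(k^2)$. Because each cut vertex is a child of exactly one block, each non‑cut vertex lies in exactly one block, and each block has one parent cut vertex, there are $O(n)$ insertions and collapses in total, so the dynamic program runs in $O(k^3n+m)$ per guess and the whole algorithm in $O(2^{\db(G)}((k^3+\db(G))n+m))$; the \textsc{Sparsest} case is identical with $\max$ replaced by $\min$. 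The main obstacle is precisely the careful bookkeeping: distributing the quadratic block reward $\binom{|T\cap B|}{2}$ to the block node and accounting for $w$ and for edges incident to cut vertices so that every contribution is counted exactly once across the tree, together with checking that the two‑dimensional block tables do not inflate the running time, which rests on the charging argument above.
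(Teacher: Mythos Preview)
Your proposal is correct and follows essentially the same approach as the paper: guess the intersection of the solution with the deletion set (Lemma~\ref{lem:Deletion_Set} in the paper), reduce to a vertex-weighted \textsc{Densest $k$-Subgraph} on the remaining block graph, and solve the latter by dynamic programming on the block-cut tree (Lemma~\ref{lem:DkSWV:block}). The only differences are cosmetic: you treat non-cut vertices as trivial cut vertices and parameterize the block table by $(m,j)$ rather than the paper's $(j,\alpha,\ell)$, but the underlying recursion and the $O(k^3n)$ bookkeeping are the same.
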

\revised{Theorem \ref{thm:DkS:nd} and \ref{thm:DkS:db} imply that \textsc{Densest $k$-Subgraph} and \textsc{Sparsest $k$-Subgraph} are are fixed-parameter tractable when parameterized by the neighborhood diversity or the block deletion number.}

The neighborhood diversity is introduced by Lampis \cite{Lampis2012}. It is a structural parameter for {\em dense} graphs whereas the tree-width is for {\em sparse} graphs.
Also, the block deletion number of a graph is defined as the minimum number of vertices such that the graph becomes a block graph by removing them.
Both parameters are related to other structural parameters between clique-width and vertex cover number. 
Unlike tree-width, a graph having small neighborhood diversity or small block deletion number may have dense sub-structures.
For example, the neighborhood diversity and the block deletion number of a complete graph are 1, respectively, though the tree-width is $n-1$.
Figure \ref{fig:structural_parameter} shows the relationship between such graph parameters and the parameterized complexity of \textsc{Densest $k$-Subgraph} and \textsc{Sparsest $k$-Subgraph}.

Then, we generalize the approach of the block deletion number to the bounded clique-width deletion number.
\begin{theorem}\label{thm:bounded-cw}
Let $\mathcal{C}$ be a class of bounded clique-width graphs.
Then given a $\mathcal{C}$-deletion set $D$ of $G$, \textsc{Densest (Sparsest) $k$-Subgraph} can be solved in time $2^{|D|}n^{O(1)}$.
\end{theorem}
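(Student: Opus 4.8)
The plan is to branch on which vertices of the deletion set $D$ belong to the sought solution, and, for each guess, to reduce the remaining task to a vertex-weighted \textsc{Densest $k'$-Subgraph} problem on the graph $G-D$, which lies in a class of bounded clique-width and can therefore be solved in polynomial time. Let $c$ be the constant bounding the clique-width of every graph in $\mathcal{C}$, so $\cw(G-D)\le c$. For each of the $2^{|D|}$ subsets $D'\subseteq D$ we treat $D'$ as the intersection $S\cap D$ of an optimal solution with $D$; if $k-|D'|<0$ or $k-|D'|>|V(G)|-|D|$ this branch is infeasible and is discarded.

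Fix a branch $D'$ and set $k':=k-|D'|$. A solution of the form $S=D'\cup T$ with $T\subseteq V(G)\setminus D$ and $|T|=k'$ satisfies
\[
e(G[S])=e(G[D'])+\sum_{v\in T}|N_G(v)\cap D'|+e(G[T]),
\]
where $e(G[D'])$ is a fixed constant of the branch. Hence, giving each vertex $v\in V(G)\setminus D$ the weight $w(v):=|N_G(v)\cap D'|$, it suffices to find $T\subseteq V(G-D)$ with $|T|=k'$ maximizing (for \textsc{Densest}) or minimizing (for \textsc{Sparsest}) the quantity $\sum_{v\in T}w(v)+e(G[T])$ — a vertex-weighted variant of \textsc{Densest}/\textsc{Sparsest} $k'$-\textsc{Subgraph} on the bounded-clique-width graph $G-D$.

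To solve this variant, I would first compute in polynomial time a clique-width expression of $G-D$ of width $c'=f(c)=O(1)$, using the known polynomial-time approximation algorithm for clique-width, and then run the standard bottom-up dynamic program along this expression, extended in the obvious way by vertex weights and a cardinality constraint. A table entry at a node is indexed by the vector $(n_1,\dots,n_{c'})$ recording how many currently selected vertices carry each label, and stores the best achievable value of $\sum_{v\ \text{selected}}w(v)$ plus the number of edges among the selected vertices processed so far. A leaf contributes a single optionally selected labelled vertex (its weight is added if selected); disjoint union adds the index vectors and the values; a join $\eta_{i,j}$ adds $n_i\cdot n_j$ to the value; relabelling merges two coordinates and keeps the best value on collisions; at the root one reads off the best value over vectors summing to $k'$. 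There are at most $(k+1)^{c'}$ index vectors and polynomially many expression nodes, so each branch costs $n^{O(c')}=n^{O(1)}$, giving total running time $2^{|D|}n^{O(1)}$. The \textsc{Sparsest} case is obtained simply by replacing $\max$ by $\min$ throughout; equivalently one may note that $\overline{G-D}$ still has clique-width $O(c)$ and invoke the \textsc{Densest} algorithm on the complement.

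The main point to be careful about is the correct accounting of the edges between the guessed part $D'$ and the chosen part $T$: this is exactly what forces the vertex-weighted generalization of \textsc{Densest $k$-Subgraph}, and one must check that the clique-width dynamic program tolerates this extra per-vertex additive term alongside the size constraint. Both are routine extensions of the algorithm of Broersma et al.\ cited earlier, whose $2^{O(\cw(G)\log k)}n^{O(1)}$ running time is already polynomial once $\cw(G)$ is a constant; the only genuinely external ingredient is the polynomial-time computation of a bounded-width clique-width expression of $G-D$.
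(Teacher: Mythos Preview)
Your proposal is correct and takes essentially the same approach as the paper: the paper also branches over all $2^{|D|}$ subsets $D'\subseteq D$, reduces to a vertex-weighted \textsc{Densest $k'$-Subgraph} on $G-D$ with weights $w(v)=|N_G(v)\cap D'|$, and solves that by a dynamic program on a clique-width expression indexed by the label-count vector $(s_1,\ldots,s_c)$. The paper separates these two steps into standalone lemmas (the deletion-set reduction and the clique-width DP for the weighted variant), but the content is the same as what you wrote.
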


\revised{In particular, cographs and distance-hereditary graphs including block graphs are bounded clique-width graphs, respectively \cite{Courcelle2000,Golumbic2000}.
Therefore, Theorem \ref{thm:bounded-cw} implies that \textsc{Densest (Sparsest) $k$-Subgraph} is fixed-parameter tractable with respect to distance-hereditary deletion number and cograph deletion number, respectively.}

% Furthermore, \textsc{Distance-hereditary Vertex Deletion} and \textsc{Cograph Vertex Deletion} can be computed in time $2^{O(|D|)}n^{O(1)}$, respectively \cite{Eiben2018,Nastos2012}.
% Thus, we obtain the following corollary.
% \begin{corollary}
% \textsc{Densest (Sparsest) $k$-Subgraph} is fixed-parameter tractable with respect to distance-hereditary deletion number and cograph deletion number, respectively. 
% \end{corollary}

\textsc{Maximum $k$-Vertex Cover} is the problem to find a vertex subset $S$ of size $k$ that maximizes the number of edges such that at least one its endpoint is in $S$.
For any graph $G=(V,E)$, $S$ is an optimal solution of \textsc{Maximum $k$-Vertex Cover} in $G$ if and only if $V\setminus S$ is an optimal solution of \textsc{Sparsest $(n-k)$-Subgraph} in $G$ \cite{Bougeret2014}.
Thus, the following corollary holds.
  \begin{corollary}\label{col:MVC}
%\textsc{Maximum $k$-Vertex Cover} is fixed parameter tractable when parameterized by neighborhood diversity, block deletion number, distance-hereditary deletion number, and cograph deletion number, respectively.
\revised{Let $\mathcal{C}$ be a class of bounded clique-width graphs. \textsc{Maximum $k$-Vertex Cover} can be solved in time $f(\nd(G))n^{O(1)}$, and $2^{|D|}n^{O(1)}$, respectively, where $\nd(G)$ is the neighborhood diversity and $D$ is a  $\mathcal{C}$-deletion set.}
  \end{corollary}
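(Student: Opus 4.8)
The plan is to reduce \textsc{Maximum $k$-Vertex Cover} to \textsc{Sparsest $(n-k)$-Subgraph} on the \emph{same} graph $G$ and then invoke Theorems \ref{thm:DkS:nd} and \ref{thm:bounded-cw}. Concretely, given an instance $(G,k)$ with $G=(V,E)$ and $n=|V|$, I would first dispose of the trivial cases $k\ge n$ (where $S=V$ is an optimal solution), and otherwise run the algorithm for \textsc{Sparsest $(n-k)$-Subgraph} on $G$ to obtain an optimal vertex set $S'$ of size $n-k$. By the characterization of \cite{Bougeret2014} quoted above, $S:=V\setminus S'$ is then an optimal solution of \textsc{Maximum $k$-Vertex Cover} in $G$; returning it costs only $O(n)$ additional time.

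The reason this is a corollary rather than a fresh argument is that the reduction does not modify the graph at all: the instance of \textsc{Sparsest $(n-k)$-Subgraph} produced is $(G,\,n-k)$, so every structural parameter of $G$ is preserved verbatim. In particular, the neighborhood diversity of the new instance is exactly $\nd(G)$, and any $\mathcal{C}$-deletion set $D$ of $G$ (for a fixed class $\mathcal{C}$ of bounded clique-width graphs) is also a $\mathcal{C}$-deletion set of the new instance, of the same size $|D|$. Hence Theorem \ref{thm:DkS:nd} yields running time $f(\nd(G))n^{O(1)}$ for the first claim, and Theorem \ref{thm:bounded-cw} yields $2^{|D|}n^{O(1)}$ for the second, the additive $O(n)$ for forming the complement being absorbed into the $n^{O(1)}$ factor.

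There is essentially no obstacle here; the only points to check are the boundary conditions $k=n$ and $k>n$ (both trivial) and the transfer of optimality across complementation, which is exactly the fact imported from \cite{Bougeret2014}. If one wished to be fully self-contained, that fact can be re-derived in one line: for any $S$ with $|S|=k$, the number of edges having at least one endpoint in $S$ equals $|E|$ minus the number of edges lying entirely inside $V\setminus S$, so maximizing the former over all $k$-subsets $S$ is equivalent to minimizing the latter over all $(n-k)$-subsets $V\setminus S$, i.e.\ to solving \textsc{Sparsest $(n-k)$-Subgraph}.
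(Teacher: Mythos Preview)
Your proposal is correct and matches the paper's own argument essentially verbatim: the paper also derives the corollary directly from the equivalence ``$S$ is optimal for \textsc{Maximum $k$-Vertex Cover} iff $V\setminus S$ is optimal for \textsc{Sparsest $(n-k)$-Subgraph}'' (cited from \cite{Bougeret2014}) together with Theorems~\ref{thm:DkS:nd} and~\ref{thm:bounded-cw}. Your additional remarks on boundary cases and the one-line re-derivation of the equivalence are extra care beyond what the paper spells out, but the approach is the same.
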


In this paper, we also give an FPT approximation algorithm for the \textsc{Densest $k$-Subgraph} problem.
 In \cite{Bourgeois2017}, Bourgeois et al. propose a $3$-approximation $2^{\vc(G)/2}n^{O(1)}$-time algorithm for \textsc{Densest $k$-Subgraph} where $\vc(G)$ is the vertex cover number of $G$.
In this paper, we improve the FPT approximation algorithm  in \cite{Bourgeois2017}.
Actually, we give a $2$-approximation $2^{\tc(G)/2}n^{O(1)}$-time algorithm parameterized by twin cover number $\tc(G)$. 
\begin{theorem}\label{thm:twincover:DkS}
There is a $2$-approximation  
$O(2^{\tc(G)/2} ( (k^3+\tc(G))n + m))$-time algorithm for \textsc{Densest $k$-Subgraph} where $\tc(G)$ is the twin cover number of an input graph $G$.
\end{theorem}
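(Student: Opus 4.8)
The plan is to enumerate only ``half'' of the twin cover and to patch up the other half with two cheap but combinatorially different completions, one of which will always recover at least half of the optimum. First I would compute a twin cover $X$ of $G$ with $|X|=\tc(G)=:t$ (a minimum twin cover can be found within the target running time by the standard reduction to \textsc{Vertex Cover} on the quotient of $G$ by its true-twin classes, which costs $c^{t}n^{O(1)}$ for some $c<\sqrt2$). By definition of a twin cover, $G-X$ is a disjoint union of cliques $C_1,\dots,C_p$ whose vertices are, inside each $C_i$, mutually true twins of $G$; in particular all vertices of $C_i$ share the same neighbourhood $N_i:=N(C_i)\cap X$. Hence a solution is determined, up to automorphism, by its trace $A:=S\cap X$ on the cover together with the counts $c_i:=|S\cap C_i|$, and, writing $e(\cdot)$ for the number of edges inside a set and $e(\cdot,\cdot)$ for the number of edges between two disjoint sets, its value is $e(A)+\sum_i c_i|N_i\cap A|+\sum_i\binom{c_i}{2}$. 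Once $A$ is fixed, the best $c_i$'s subject to $\sum_i c_i=k-|A|$ are found by exactly the clique dynamic programme of Theorem~\ref{thm:DkS:db} in time $O((k^3+t)n)$. The exact algorithm would try all $2^{t}$ subsets $A\subseteq X$; the whole point is to get away with $2^{t/2}$.

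So I would fix a balanced partition $X=X_1\sqcup X_2$ and, for each $j\in\{1,2\}$ and each guess $Y\subseteq X_j$ (at most $2^{\lceil t/2\rceil+1}$ in total), build two size-$k$ candidates containing $Y$: a \emph{degree completion}, adding to $Y$ the $k-|Y|$ vertices of $V\setminus Y$ with the most neighbours in $Y$ (ties broken arbitrarily); and a \emph{clique completion}, adding to $Y$ the best $k-|Y|$ vertices from $V\setminus X$, chosen by the clique dynamic programme with weights $|N_i\cap Y|$. The algorithm returns the densest candidate it ever builds. Each guess costs $O(m)$ to form the quantities $|N(v)\cap Y|$ and $|N_i\cap Y|$, $O(n)$ for a linear-time selection, and $O((k^3+t)n)$ for the dynamic programme; together with the cost of computing $X$ this yields the claimed bound $O(2^{\tc(G)/2}((k^3+\tc(G))n+m))$.

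For the approximation guarantee I would take an optimal $S^\ast$, set $A_j:=S^\ast\cap X_j$ and $B:=S^\ast\setminus X$, and split the edges of $G[S^\ast]$ by the location of their endpoints:
\[
  \mathrm{OPT}=e(A_1)+e(A_2)+e(A_1,A_2)+e(A_1,B)+e(A_2,B)+e(B).
\]
The key point is that the two completions recover complementary pieces of this sum. For the guess $Y=A_1\subseteq X_1$, the degree completion has value at least $e(A_1)+\sum_{v\in\mathrm{fill}}|N(v)\cap A_1|$; since $A_2\cup B$ is a $(k-|A_1|)$-element subset of $V\setminus A_1$ and the fill maximises that sum, this is at least $e(A_1)+e(A_1,A_2)+e(A_1,B)$ -- crucially the degree-greedy fill may take the vertices of $A_2$, so it collects the cross edges $e(A_1,A_2)$, even though it ignores the edges inside cliques. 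Symmetrically, for $Y=A_2\subseteq X_2$ the clique completion has value at least $e(A_2)+e(A_2,B)+e(B)$, because the clique-count vector of $B$ (padded up to $k-|A_2|$ clique vertices, which only increases the value) is feasible for the dynamic programme; this time we collect the within-clique edges $e(B)$ but not $e(A_1,A_2)$. Both candidates are enumerated, so the output has at least $\tfrac12\big(e(A_1)+e(A_1,A_2)+e(A_1,B)+e(A_2)+e(A_2,B)+e(B)\big)=\tfrac12\mathrm{OPT}$ edges.

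The step I expect to be the real obstacle is exactly the design of this pair of completions and the check that their guaranteed values add up to the \emph{whole} of $\mathrm{OPT}$ with nothing lost and nothing double-counted: the degree-greedy fill captures one half's cover-incident edges, including the cross edges, but misses the clique-internal edges, whereas the exact clique dynamic programme captures the clique-internal edges but cannot afford to look at the other half's cover vertices. It is the twin-cover structure -- the interchangeability of the vertices inside each clique, which makes the clique dynamic programme simultaneously exact and polynomial -- that lets the second completion be both cheap and tight, and this is where the argument genuinely improves on the vertex-cover-based $3$-approximation of Bourgeois et al. The remaining bookkeeping is routine and mirrors Theorem~\ref{thm:DkS:db}: handling degenerate instances in which $V\setminus X$ has fewer than $k$ vertices (allow the clique dynamic programme to spend leftover budget on cover vertices treated as singleton cliques) and the precise timing of the programme.
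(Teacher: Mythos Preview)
Your argument is correct, but the paper takes a different and somewhat cleaner route. The paper does not work with the twin cover directly: it first proves the stronger statement for the \emph{block deletion number} $\db(G)\le\tc(G)$ (Theorem~\ref{thm:approx:DkS}) and then obtains Theorem~\ref{thm:twincover:DkS} as a corollary by computing a minimum twin cover in time $O(1.2738^{\tc(G)}+\tc(G)n+m)$ and using it as a block deletion set. For Theorem~\ref{thm:approx:DkS} the paper splits a block deletion set $D$ as $V_1\subseteq D$ with $|V_1|=\lfloor\db(G)/2\rfloor$ and $V_2=V\setminus V_1$, and then partitions the \emph{edge set} rather than the optimum: with $E_1=E(G[V_1])$, $E_2=E(G[V_2])$ and $E'=E\setminus(E_1\cup E_2)$ it runs the exact algorithm of Theorem~\ref{thm:DkS:db} once on $G''=(V,E_1\cup E_2)$ (a disjoint union of $G[V_1]$ and $G[V_2]$, each with block deletion number at most $\lceil\db(G)/2\rceil$) and once on the bipartite graph $G'=(V,E')$ (where $V_1$ is a vertex cover of size $\lfloor\db(G)/2\rfloor$), and outputs the better solution; since $E(G[S^\ast])\subseteq E'\cup(E_1\cup E_2)$, one of the two recovers at least $\mathrm{OPT}/2$.

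Both arguments halve the exponent by enumerating only one side of a balanced split. The difference is in what gets split: you split $S^\ast$ into $A_1,A_2,B$ and design two tailored completions (degree-greedy and clique-DP) whose guaranteed values add up to $\mathrm{OPT}$, whereas the paper splits $E$ into two edge-disjoint graphs and invokes the exact FPT algorithm twice as a black box. The paper's version is more general (it already yields the bound in terms of $\db(G)$, and in fact extends verbatim to any bounded-clique-width deletion set) and avoids the case analysis on edge types; your version is more self-contained in that it only needs the easy cluster-graph DP rather than the full block-graph DP, at the cost of the more delicate accounting you flag in your last paragraph.
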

Note that for any graph $G$, $\tc(G)\le \vc(G)$ holds.

\subsection{Related work}
 \textsc{Densest $k$-Subgraph} is NP-hard due to the NP-hardness of $k$-clique.
It remains NP-hard on chordal graphs, comparability graphs, triangle-free graphs and bipartite graphs with maximum degree 3 \cite{Corneil1984,Feige1997}. On the other hand, it is solvable in polynomial time on cographs and split graphs \cite{Corneil1984}. 
For the parameterized complexity, \textsc{Densest $k$-Subgraph} is W[1]-hard when parameterized \revised{alone by $k$ or clique-width \cite{Broersma2013}.}
On the other hand, it is FPT when parameterized by tree-width and clique-width plus $k$, respectively \cite{Broersma2013,Bourgeois2017}.
On chordal graphs, \textsc{Densest $k$-Subgraph} parameterized by $k$ is FPT but it does not admit a polynomial kernel unless NP $\subseteq$ coNP/poly \cite{Bougeret2014}.

% Using structural  parameters, \textsc{Densest $k$-Subgraph} is solvable in time $2^{\tw(G)}n^{O(1)}$  and exponential space, and in time $2^{\vc(G)}n^{O(1)}$ and polynomial space, respectively, where $\tw(G)$ and $\vc(G)$ are the tree-width and the vertex cover number of an input graph $G$ \cite{Bourgeois2017}.
% %Thus, \textsc{Densest $k$-Subgraph} is FPT with respect to  tree-width and  vertex cover number, respectively.
% As for clique-width $\cw(G)$,  \textsc{Densest $k$-Subgraph} is W[1]-hard and cannot be solved in time $2^{o(\cw(G) \log k)}n^{O(1)}$  unless ETH fails, while it can be solved in time $2^{O(\cw(G) \log k)}n^{O(1)}$ \cite{Broersma2013}.
% Bougeret et al. show that \textsc{Densest $k$-Subgraph} parameterized by $k$ is FPT  and does not admit a polynomial kernel unless NP $\subseteq$ coNP/poly  on chordal graphs \cite{Bougeret2014}.

The approximability of \textsc{Densest $k$-Subgraph} is also well-studied. 
 Kortsarz and Peleg propose an $O(n^{0.3885})$-approximation algorithm \cite{Kortsarz1993}. 
 Then, Feige, Kortsarz, and  Peleg improve the approximation ratio to $O(n^{\delta})$ for some $\delta < 1/3$ \cite{Feige2001}.
 The best known approximation algorithm, proposed by  Bhaskara et al., archives the approximation ratio $O(n^{1/4+\epsilon})$ within $n^{O(1/\epsilon)}$ time, for any $\epsilon>0$ \cite{Bhaskara2010}.
For the hardness of approximation for \textsc{Densest $k$-Subgraph}, 
%though it is commonly believed that the problem is hard to approximate to within constant ratio \cite{Bhaskara2012,Manurangsi2017}, 
even the NP-hardness of approximation for a constant factor is still open.
To avoid this situation,  using  stronger assumptions, several inapproximability results are shown.
Khot  shows that there is no polynomial-time approximation scheme (PTAS) for \textsc{Densest $k$-Subgraph} assuming NP $\nsubseteq \cap_{\epsilon>0}$ BPTIME($2^{n^{\epsilon}}$) \cite{Khot2006}.
Moreover, Manurangsi proves that it does not admit an $n^{1/{(\log \log n)^c}}$-approximation algorithm for some constant $c>0$ under ETH \cite{Manurangsi2017}.
For the FPT inapproximability of \textsc{Densest $k$-Subgraph}, Chalermsook et al. show that there is no $k^{o(1)}$-FPT approximation algorithm assuming Gap-ETH \cite{Chalermsook2017}.
\revised{Manurangsi, Rubinstein, and Schramm show that there is no $o(k)$-approximation $f(k)n^{O(1)}$-time algorithm for \textsc{Densest $k$-Subgraph} under the Strongish Planted Clique Hypothesis (SPCH) \cite{MRS2021}.}
%Densest k-Subgraph Approximation on Intersection Graphs http://people.iiis.tsinghua.edu.cn/~jianli/paper/waoa-kdense-final.pdf
%PTAS for Densest   k -Subgraph in Interval Graphs https://link.springer.com/article/10.1007/s00453-014-9956-7
%Approximating the Sparsest k-Subgraph in Chordal Graphs https://link.springer.com/article/10.1007/s00224-014-9568-2

 \textsc{Sparsest $k$-Subgraph} is the \revised{complementary} problem of  \textsc{Densest $k$-Subgraph}. The problem is NP-hard because it is a generalization of \textsc{Independent Set}.
As with \textsc{Densest $k$-Subgraph}, \textsc{Sparsest $k$-Subgraph} is \revised{FPT} with respect to tree-width. 
Since for any graph $G$ and its complement $\bar{G}$ , $\cw(\bar{G})\le 2\cw(G)$ holds, \textsc{Sparsest $k$-Subgraph} is W[1]-hard when parameterized by clique-width and it cannot be solved in time $2^{o(\cw(G) \log k)}n^{O(1)}$ unless ETH fails, whereas it is solvable in time $2^{O(\cw(G) \log k)}n^{O(1)}$.

\textsc{Maximum $k$-Vertex Cover} is  NP-hard even on bipartite graphs \cite{Apollonio2014,Gwenael2015} and W[1]-hard when parameterized by $k$ \cite{Cai2008,Guo2007}.
Using the pipage rounding technique, Ageev and Sviridenko give a $0.75$-approximation algorithm \cite{Ageev1999}.
\revised{The best known approximation guarantee is $0.929$  \cite{RT2012:approx:maxvc,Manurangsi2019} and it is tight \cite{AS2019}.}
Although \textsc{Maximum $k$-Vertex Cover} is APX-hard \cite{Petrank1994}, there is a $(1-\epsilon)$-approximation algorithm that runs in time $(1/\epsilon)^k n^{O(1)}$ for any $\epsilon>0$ \cite{Manurangsi2019}.

\subsubsection*{Independent work}

\revised{Mizutani and Sullivan independently and simultaneously proved that \textsc{Densest $k$-Subgraph} is fixed-parameter tractable when parameterized by neighborhood diversity and cluster deletion number \cite{MS2022}.}
%Our results strengthen the FPT algorithm when parameterized by cluster deletion number.

% The organization of this paper is as follows. In Section \ref{sec:prelinimaries}, we give some notations and definitions.
% In Section \ref{sec:nd} and \ref{sec:block}, we prove Theorems \ref{thm:DkS:nd} and \ref{thm:DkS:db}, respectively.
% In Section \ref{sec:parameter_approx}, we give a parameterized approximation algorithm for \textsc{Densest $k$-Subgraph}.
% Finally, we present conclusions in Section \ref{sec:conclusion}.

\section{Preliminaries}\label{sec:prelinimaries}
In this paper, we use the standard graph notations.
Let $G = (V(G), E(G))$ be an undirected graph, where $V(G)$ and $E(G)$ denote the
set of vertices and the set of edges, respectively. 
\revised{For simplicity, we sometimes denote $V$ and $E$ instead of $V(G)$ and $E(G)$, respectively.}
%For simplicity, we sometimes omit the subscript of $G$.
For a vertex subset $S$, we denote by $G[S]$ the subgraph induced by $S$.
We call a subgraph with $\ell$ vertices an {\em $\ell$-subgraph}.
For a vertex $v$, we denote by $N(v)$ the set of neighbors of $v$ and by $d(v)$ the degree of $v$.
%For a graph class $\mathcal{G}$, a vertex set $D$ of $G=(V,E)$ is called an \emph{$\mathcal{G}$-deletion set} of $G$ if $G[V\setminus D]$ is in $\mathcal{G}$ and the size of a minimum $\mathcal{G}$-deletion set in $G$ is called the  \emph{$\mathcal{G}$-deletion number} of $G$. 
We denote by $\cdot^T$ the transpose of a vector (resp., matrix).
For the basic definitions of parameterized complexity and structural parameters such as tree-width $\tw(G)$, we refer the reader to the book \cite{Cygan2015}.

\subsection{Integer quadratic programming}
In \textsc{Integer Quadratic Programming (IQP)}, the input consists of an $n\times n$ integer symmetric matrix ${\bf Q}$, an $m\times n$ integer matrix ${\bf A}$ and $m$-dimensional integer vector ${\bf b}$ and the task is to find an optimal solution ${\bf x}\in {\mathbb Z}^n$ to the following optimization problem.
% \begin{align}
% \text{max (min)} \ \ \  & {\bf x}^{T}{\bf Q}{\bf x} +{\bf q}{\bf x}\nonumber \\ 
%  \text{subject to} \ \ \   &{\bf A}{\bf x}\le{\bf b} \\
%   \ \ \ & {\bf C}{\bf x}= {\bf d} \nonumber\\
%  \ \ \ & {\bf x}\in {\mathbb Z}^n \nonumber
% \end{align}

\begin{align*}
    &\text{max (min)} \hspace{0.2cm}
    {\bf x}^{T}{\bf Q}{\bf x} +{\bf q}{\bf x}\\
 &\text{subject to}  \hspace{0.2cm} {\bf A}{\bf x}\le{\bf b},  {\bf C}{\bf x}= {\bf d}, {\bf x}\in {\mathbb Z}^n.
\end{align*}

Let $\alpha$ be the largest absolute value of an entry of ${\bf Q}$,  ${\bf A}$, ${\bf C}$, and ${\bf q}$. 
Then, Lokshtanov proved that \textsc{Integer Quadratic Programming} is fixed-parameter tractable when parameterized by $n$ and $\alpha$ \cite{Lokshtanov2015}.%\footnote{We remark that Lokshtanov's paper is only appeared as a preprint at this time. However, the result is well-known by now, considered to be correct, and has already been accepted. The paper is cited by some published papers \cite{Crampton2019,Gaspers2019,Gavenciak2020}.}
\begin{theorem}[\cite{Lokshtanov2015}]\label{thm:IQP:fpt}
\textsc{Integer Quadratic Programming} can be solved in time  $f(n,\alpha)L^{O(1)}$ where $f$ is a computable function and $L$ is the length of the bit-representation of the input instance.
\end{theorem}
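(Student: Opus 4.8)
The plan is to build on Lenstra's fixed-parameter algorithm for \textsc{Integer Linear Programming} parameterized by the number of variables (and Kannan's refinement), and to push the quadratic objective through its recursion. Recall the engine behind that algorithm: given the feasible polyhedron $P=\{{\bf x}: {\bf A}{\bf x}\le {\bf b},\, {\bf C}{\bf x}={\bf d}\}$, Khinchine's flatness theorem guarantees that either $P$ is contained in a box of side bounded by a function of $n$ (so that $P\cap\mathbb{Z}^n$ can be enumerated directly), or there is a nonzero integer direction ${\bf c}$ along which $P$ is \emph{flat}, i.e.\ $\max_{{\bf x}\in P}{\bf c}^T{\bf x}-\min_{{\bf x}\in P}{\bf c}^T{\bf x}$ is bounded by a constant depending only on $n$. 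In the latter case one branches over the $O(f(n))$ integer values $\gamma$ of ${\bf c}^T{\bf x}$, appending the equation ${\bf c}^T{\bf x}=\gamma$ and recursing on an $(n-1)$-dimensional instance; after at most $n$ levels the recursion reaches a bounded box. First I would replay exactly this decomposition, but carry the objective ${\bf x}^T{\bf Q}{\bf x}+{\bf q}{\bf x}$ along each branch.

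Two issues must be handled before the recursion can conclude. The first is \emph{boundedness}: an unbounded polyhedron may still carry a bounded objective, so I would examine how the quadratic form behaves on the recession cone of $P$. Along a ray ${\bf x}_0+t{\bf r}$ with ${\bf r}$ a recession direction the objective equals $({\bf r}^T{\bf Q}{\bf r})t^2+(\cdots)t+\text{const}$, so deciding unboundedness amounts to deciding whether some recession direction makes this tend to $+\infty$ (resp.\ $-\infty$) — a fixed-dimension subtask. If it does, we report $+\infty$ (resp.\ $-\infty$); otherwise the optimum is attained at an integer point of bounded description, and the branching recursion locates it by propagating, at each node, the best value returned by its children. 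The second issue is that fixing ${\bf c}^T{\bf x}=\gamma$ lets us eliminate one variable by substitution, so that each subproblem is again an \textsc{Integer Quadratic Programming} instance in one fewer variable, whose matrix and linear part are obtained by expanding the substituted objective.

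The hard part is controlling the growth of the coefficients through this recursion, and this is exactly where the parameter $\alpha$ (and not $n$ alone) becomes indispensable. The trouble is that the flatness direction ${\bf c}$ produced by the geometry of numbers may have entries vastly larger than $\alpha$; solving ${\bf c}^T{\bf x}=\gamma$ for one variable and substituting into ${\bf x}^T{\bf Q}{\bf x}+{\bf q}{\bf x}$ then introduces large, possibly fractional, coefficients, which would destroy any naive induction that assumes a bounded $\alpha$ at every level. The core of the argument is therefore a careful accounting that keeps the largest absolute value of an entry of the evolving matrices bounded by a function of $n$ and $\alpha$ throughout all $n$ levels — by selecting branching directions of controlled norm, clearing denominators in a way that does not amplify entries uncontrollably, and bounding how the quadratic and linear coefficients transform under each elimination. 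Once this invariant is maintained, every leaf of the recursion is an instance over a box whose size and coefficients are both bounded in terms of $n$ and $\alpha$, which is solved by direct evaluation; multiplying the $f(n)^{O(n)}$ branching by the bounded per-node cost then yields the claimed $f(n,\alpha)L^{O(1)}$ bound. I expect this coefficient-control step to be the crux and the most delicate part of the whole proof.
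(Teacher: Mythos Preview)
The paper does not prove this statement; it is quoted from \cite{Lokshtanov2015} and used as a black box. So there is no ``paper's own proof'' to compare against, and your task was really to reproduce Lokshtanov's argument.

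Your proposal has a genuine gap at its very foundation: the dichotomy you attribute to the flatness theorem is not what the theorem says. Khinchine's flatness theorem asserts that a convex body \emph{containing no lattice point} has lattice width bounded by a function of the dimension; equivalently, if the body is wide in every integer direction then it contains a lattice point. It does \emph{not} say that a body is either contained in a small box or flat. Take $P=[0,N]^n$ for large $N$: it is not flat in any direction, it is not contained in a box of side $f(n)$, and it has $\Theta(N^n)$ lattice points. Your recursion has no flat direction to branch on here, and no small box to enumerate; it simply stalls. This is precisely why Lenstra's scheme solves \emph{feasibility} directly but handles \emph{linear optimization} only via binary search on the objective value (adding ${\bf c}^T{\bf x}\ge t$ as a linear constraint and testing feasibility). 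That reduction is unavailable when the objective is quadratic, since ${\bf x}^T{\bf Q}{\bf x}\ge t$ is not a linear constraint. Consequently the whole Lenstra-style recursion, even with perfect coefficient control, does not yield an algorithm for IQP.

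Lokshtanov's actual proof takes a different route: it is a \emph{proximity} argument. One shows, using step directions coming from the Graver basis of the constraint matrix (whose elements have $\ell_\infty$-norm bounded in terms of $n$ and $\alpha$), that some optimal integer solution lies within distance $g(n,\alpha)$ of an efficiently computable anchor point. One then enumerates the $g(n,\alpha)^{O(n)}$ integer points in that box and evaluates the objective on each. The parameter $\alpha$ enters not through controlling substitutions in a branching tree, but through bounding the norm of Graver elements and hence the proximity radius. If you want to reconstruct the proof, that is the line to pursue.
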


% The following is the equivalent form of IQP (\ref{P:IQP}).
% \begin{align}
% \text{max(min)} \ \ \  & {\bf x}^{T}{\bf Q}{\bf x} \nonumber \\ 
%  \text{subject to} \ \ \   &{\bf A}{\bf x}\le{\bf b} \label{P:IQP2}\\
%   \ \ \ & {\bf C}{\bf x}= {\bf d} \nonumber\\
%  \ \ \ & {\bf x}\in {\mathbb Z}^n \nonumber
% \end{align}
% Here, ${\bf C}$ is an integer matrix and ${\bf d}$ is an integer vector. Without loss of generality we can assume that  the rank of ${\bf C}$ is at most $n$ since if not so, IQP (\ref{P:IQP2}) is infeasible. 
% In \cite{Lokshtanov2015}, Lokshtanov proves Theorem \ref{thm:IQP:fpt} by proving that IQP (\ref{P:IQP2}) can be solved in time  $f(n,m,\alpha,\Delta)L^{O(1)}$ where $\Delta$ is the maximum absolute value of the determinant of a square submatrix of ${\bf C}$.
% \begin{theorem}[\cite{Lokshtanov2015}]\label{thm:IQP2:fpt}
%  IQP (\ref{P:IQP2}) can be solved in time  $f(n,m,\alpha,\Delta)L^{O(1)}$.
% \end{theorem}
% In \cite{Lokshtanov2015}, Lokshtanov mentions that the objective function in IQP (\ref{P:IQP2}) can be generalized as $ {\bf x}^{T}{\bf Q}{\bf x} + {\bf q}^T{\bf x}$ where ${\bf q}$ is an integer vector by  introducing a new variable $x'$, adding the constraint  $x' = 1$ to the
% system  ${\bf C}{\bf x}= {\bf d}$ and replacing ${\bf Q}$ by $(n+1)\times (n+1)$ matrix
% \begin{align*}
%  \left[
%     \begin{array}{cc}
%     {\bf Q} & {\bf q} \\
%     {\bf 0}^T & 0
%         \end{array}
%   \right]
% \end{align*}
% where ${\bf 0}$ is a zero vector. 

\subsection{Structural parameters}
In this subsection, we introduce the definitions of several structural parameters.

\subsubsection{Clique-width}
% \begin{definition}\label{def:clique-width}
% Let $c$ be a positive integer.
% A \emph{$c$-graph} is a graph whose vertices labeled by an integer in $\{1, 2, \ldots, c\}$.
% A vertex labeled by $i$ is called an \emph{$i$-labeled vertex}.
% The \emph{clique-width} $\cw(G)$ is the minimum integer $c$ such that $G$ can be constructed by the following operations.
% \begin{description}
%     \item[{[O1]}] Add a new vertex with label $i \in \{1, 2, \ldots, c\}$;
%     \item[{[O2]}] Take a disjoint union of $c$-graphs $G_1$ and $G_2$;
%     \item[{[O3]}] Take two labels $i$ and $j$ and add an edge between every pair of an $i$-labeled vertex and a $j$-labeled vertex; 
%     \item[{[O4]}] Relabel $i$-labeled vertices to label $j$.
% \end{description}
% \end{definition}

%\color{red}
\revised{We first give the definition of clique-width (see also \cite{COURCELLE1993,courcelle_linear_2000}).
A vertex-labeled graph is a graph such that each vertex has an integer as a label. A \emph{$c$-graph} is a vertex-labeled graph with labels $\{1,\ldots,c\}$. We call a vertex labeled by $i$ an \emph{$i$-labeled vertex}. Then the \emph{clique-width} $\cw(G)$ of $G$ is defined as the minimum integer $c$ such that $G$ can be constructed by the following operations:
\begin{description}
    \item[{[O1]}] Create a vertex with label $i \in \{1, 2, \ldots, c\}$;
    \item[{[O2]}] Take a disjoint union of two $c$-graphs;
    \item[{[O3]}] For two labels $i$ and $j$, connect every pair of an $i$-labeled vertex and $j$-labeled vertex by an edge;
    \item[{[O4]}] Relabel all the labels of  $i$-labeled vertices to label $j$.
\end{description}}

% For the above operations, we can a rooted binary tree, called a {\em $c$-expression tree}, whose nodes correspond to the operations. We also call each node corresponding to (O1), (O2), (O3), and (O4) an {\em introduce} node $\circ_i$, a {\em union} node $\cup$, a {\em join} node $\eta_{ij}$, and a {\em relabel} node $\rho_{ij}$, respectively.

For the above operations, we can construct a rooted binary tree, called a {\em $c$-expression tree}, such that each node corresponds to the above operations.
Nodes corresponding to (O1), (O2), (O3), and (O4) are called  {\em introduce} nodes, {\em union} nodes, {\em join} nodes, and {\em relabel} nodes, respectively.
%For a graph $G$ of clique-width $c$, a $(2^{c+1}-1)$-expression tree of $G$ can be computed in time $O(n^3)$ \cite{Hlineny2008,Oum2008,Oum2006}.
The class of \emph{cographs} is equivalent to the class of graphs with clique-width at most 2 \cite{Courcelle2000}.
\color{black}

\subsubsection{Neighborhood diversity}

% \begin{definition}
% Two vertices $u,v$ are called {\em twins} if both $u$ and $v$ have the same neighbors. 
% If twin vertices $u$ and $v$ are adjacent, they are called {\em true twins}, and otherwise called  {\em false twins}.
% \end{definition}
\begin{definition}[\cite{Lampis2012}]
Two vertices $u,v$ are called {\em twins} if both $u$ and $v$ have the same neighbors \revised{apart from $\{u, v\}$}. 
The {\em neighborhood diversity} $\nd(G)$  of a graph $G$ is the minimum
number such that $V$ can be partitioned into $\nd(G)$ sets of twin vertices.
\end{definition}
\revised{We call a set of twin vertices a {\em module}.
Note that every module forms either a clique or an independent set and
two modules either are completely joined by edges or have no edge between them in $G$. 
Given a graph, its neighborhood diversity and the corresponding partition can be computed in linear time \cite{Lampis2012,McConnell1999,Tedder2008}.}

\subsubsection{$\mathcal{G}$-deletion number}
For a graph class $\mathcal{G}$, a \emph{$\mathcal{G}$-deletion set} $D\subseteq V$ of $G$ is \revised{a} set of vertices such that $G[V\setminus D]$ is in $\mathcal{G}$.
The \emph{$\mathcal{G}$-deletion number} $\dd_{\mathcal G}(G)$ of a graph $G$ is defined as the minimum size of \revised{a} $\mathcal{G}$-deletion set in $G$.

% \paragraph{Distance-hereditary  deletion number}

% A graph $G$ is \emph{distance-hereditary} if  for every connected induced subgraph $H$ of $G$ and every pair of vertices $u, v \in V (H)$, the distance between $u$ and $v$ in $H$ is the same as the distance between $u$ and $v$ in $G$.
%A minimum distance-hereditary deletion set of size $\dhd(G)$ can be computed in time $37^{\dhd(G)}n^{O(1)}$  \cite{Eiben2018}.

% \paragraph{Block  deletion number}

% A graph $G=(V, E)$ is a {\em block graph} if each 2-connected component, called a {\em block}, is a clique~\cite{harary1963characterization}. 
% It is known that every block graph is a distance-hereditary graph, and hence the clique-width of a block graph is at most $3$ \cite{Golumbic2000}.
%A minimum block deletion set of size $\db(G)$ can be computed in time $4^{\db(G)}n^{O(1)}$  \cite{Agrawal2016}.

% \paragraph{Cluster deletion number}

% A graph $G=(V, E)$ is a {\em cluster graph} if each connected component forms a clique.
%A minimum cluster deletion set of size $\dc(G)$ can be computed in time $O(1.9102^{\dc(G)}\dc(G)(n + m))$ \cite{Boral2016}.

\subsubsection{Twin cover and vertex cover}

\revised{A set of vertices $X$ is \revised{a} {\em twin-cover} of $G$ if any edge $\{u,v\}$ satisfies either
(i) $u\in X$ or $v\in X$, or (ii) $N[u]=N[v]$.}
%The {\em twin-cover number} of $G$, denoted by $\tc(G)$, is defined as the size of minimum twin-cover in $G$. 
A {\em vertex cover} $X$ is \revised{a} set of vertices such that for every edge, at least one endpoint is in $X$.
% The {\em vertex cover number} of $G$, denoted by $\vc(G)$, is defined as the size of a minimum vertex cover in $G$. 
% A minimum  twin cover and a minimum vertex cover can be computed in time $O(1.2738^{\tc(G)}+\tc(G) n  + m)$ ~\cite{Ganian2015} and $O(1.2738^{\vc(G)}+\vc(G) n  + m)$ \cite{Chen2010}, respectively.

\subsubsection{Relationship between parameters}
\revised{For the clique-width $\cw(G)$, the tree-width $\tw(G)$, the modular-width $\mw(G)$, the neighborhood diversity $\nd(G)$, the block deletion number $\db(G)$, the cluster deletion number $\dc(G)$, the twin cover number $\tc(G)$, and  the vertex cover number $\vc(G)$, the following relationship holds.
\begin{proposition}[\cite{Bodlaender1995b,Courcelle2000,Gajarsky2013,Ganian2015,Lampis2012}]\label{prop:parameters}
For any graph $G$, the following inequalities hold: $\cw(G)\le 2^{\tw(G)+1}+1$, $\tw(G)\le \vc(G)$, $\cw(G)\le \mw(G)+2$, $\mw(G)\le \nd(G)  \le 2^{\tc(G)}+\tc(G)$,  and $\db(G)\le \dc(G)\le \tc(G)\le \vc(G)$.
\end{proposition}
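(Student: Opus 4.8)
The plan is to verify each inequality separately, invoking the literature for the three genuinely technical bounds and giving short structural arguments for the remaining containments. First, the bounds tying clique-width to the two width measures are classical, so I would simply recall them. The inequality $\cw(G)\le 2^{\tw(G)+1}+1$ follows from the standard construction that turns a width-$w$ tree decomposition into a clique-width expression: processing the decomposition bottom-up, one labels each still-live vertex by the subset of the current bag it must yet be joined to, which requires at most $2^{w+1}$ labels, plus a single ``finished'' label \cite{Courcelle2000}. The bound $\cw(G)\le\mw(G)+2$ comes from the modular decomposition: at each prime node one realizes the quotient graph (on at most $\mw(G)$ vertices) by introducing each child module under its own label and applying the appropriate join operations, reusing two auxiliary labels \cite{Gajarsky2013}. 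Finally, $\nd(G)\le 2^{\tc(G)}+\tc(G)$ is Ganian's bound: deleting a twin cover $X$ of size $t$ leaves vertices that fall into at most $2^t$ classes according to their neighborhood in $X$, and together with the $t$ singletons of $X$ this yields a neighborhood partition of size at most $2^t+t$ \cite{Ganian2015}.

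Next I would prove the chain $\db(G)\le\dc(G)\le\tc(G)\le\vc(G)$ by exhibiting the relevant cover and deletion sets directly. Any vertex cover $X$ is a twin cover, since every edge already has an endpoint in $X$, so condition (i) of the twin-cover definition holds; this gives $\tc(G)\le\vc(G)$. If $X$ is a twin cover, then every edge of $G[V\setminus X]$ joins two vertices $u,v$ with $N[u]=N[v]$; since closed-neighborhood equality is an equivalence relation whose classes induce cliques and between which no edge can run, $G[V\setminus X]$ is a disjoint union of cliques, that is, a cluster graph, yielding $\dc(G)\le\tc(G)$. Since every cluster graph is a block graph (each connected component is a single clique, hence a single block), any cluster deletion set is also a block deletion set, so $\db(G)\le\dc(G)$.

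Then I would dispatch the two remaining inequalities. For $\tw(G)\le\vc(G)$, take a vertex cover $X$ with $|X|=\vc(G)$ and build the tree decomposition consisting of one central bag $X$ together with a pendant bag $X\cup\{v\}$ for each $v\in V\setminus X$; this is valid, since every vertex and every edge lies in some bag and the bags containing any fixed vertex form a connected subtree, and its width is $\vc(G)$. For $\mw(G)\le\nd(G)$, I would observe that a neighborhood-diversity partition into $\nd(G)$ modules is in particular a modular partition whose quotient has $\nd(G)$ vertices and each of whose modules induces a clique or an independent set; feeding this decomposition into the definition of modular-width (the nontrivial quotient template has size $\nd(G)$, while the clique and independent-set modules decompose using only disjoint union and join) gives $\mw(G)\le\nd(G)$.

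The only step carrying real technical weight is the clique-width/tree-width bound $\cw(G)\le 2^{\tw(G)+1}+1$, whose self-contained proof requires the careful bag-type relabeling argument together with the bookkeeping of label reuse as the decomposition is swept; everything else reduces to producing an explicit tree decomposition or verifying that a cover set of one type is also a cover or deletion set of another. Hence I expect that single construction to be the main obstacle if a fully self-contained treatment were demanded, and otherwise the proposition is assembled from the cited results and the short containment arguments above.
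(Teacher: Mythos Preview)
Your proposal is correct. Note, however, that the paper does not actually give a proof of this proposition: it is stated as a collection of known facts with citations to \cite{Bodlaender1995b,Courcelle2000,Gajarsky2013,Ganian2015,Lampis2012} and no accompanying argument. Your write-up therefore goes beyond what the paper provides, supplying both the appropriate literature pointers for the three technical bounds ($\cw$ vs.\ $\tw$, $\cw$ vs.\ $\mw$, and $\nd$ vs.\ $\tc$) and short self-contained arguments for the remaining containments $\tw(G)\le\vc(G)$, $\mw(G)\le\nd(G)$, and $\db(G)\le\dc(G)\le\tc(G)\le\vc(G)$. All of these sketches are accurate; in particular, your treatment of the chain via the observations that a vertex cover is a twin cover, that deleting a twin cover leaves a cluster graph, and that cluster graphs are block graphs is exactly the intended elementary route.
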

}

% For the clique-width $\cw(G)$, the clique-width $\tw(G)$, the path-width $\pw(G)$, the modular-width $\mw(G)$, the neighborhood diversity $\nd(G)$, the block deletion number $\db(G)$, the cluster deletion number $\dc(G)$, the twin cover number $\tc(G)$, and  the vertex cover number $\vc(G)$, the following relationship holds.
% \begin{proposition}[\cite{Bodlaender1995b,Courcelle2000,Gajarsky2013,Ganian2015,Lampis2012}]\label{prop:parameters}
% For any graph $G$, the following inequalities hold: $\cw(G)\le 2^{\tw(G)+1}+1$, $\cw(G)\le \pw(G)+1$, $\tw(G)\le \pw(G)\le \vc(G)$, $\cw(G)\le \mw(G)+2$, $\mw(G)\le \nd(G)  \le 2^{\tc(G)}+\tc(G)$,  and $\db(G)\le \dc(G)\le \tc(G)\le \vc(G)$.
% \end{proposition}

\begin{proposition}\label{prop:clique-width:parameters}
Let $\mathcal{C}$ be a class of graphs of clique-width at most $c$ and $\dd_{\mathcal{C}}(G)$ be the $\mathcal{C}$-deletion number of a graph $G$.
Then for any graph $G$, $\cw(G)\le 2^{\dd_{\mathcal{C}}(G)+c+1}$ holds.
\end{proposition}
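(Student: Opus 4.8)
The plan is to start from a minimum $\mathcal{C}$-deletion set $D$ of $G$, say $|D| = d = \dd_{\mathcal{C}}(G)$, fix a $c$-expression $\sigma$ building $H := G[V\setminus D] \in \mathcal{C}$ (which exists since $\cw(H)\le c$), and transform $\sigma$ into a $c'$-expression for the whole graph $G$ with $c' \le 2^{d+c+1}$. The edges of $G$ are exactly the edges of $H$ together with the edges incident to $D$, so the only extra information the construction has to carry around is, for each vertex $v\in V\setminus D$, the set $N(v)\cap D$; there are only $2^d$ possibilities for this set.

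Concretely, I would first refine the label set of $\sigma$: replace each label $i\in\{1,\dots,c\}$ by the $2^d$ labels $(i,S)$ with $S\subseteq D$, giving $c\cdot 2^d$ labels, and simulate $\sigma$ on them. An introduce node creating a vertex $v$ with label $i$ becomes an introduce node giving $v$ label $(i, N_G(v)\cap D)$; a union node is unchanged; a join node on labels $i,j$ becomes the (bounded) batch of join nodes on all pairs $\bigl((i,S),(j,S')\bigr)$ with $S,S'\subseteq D$; and a relabel node $i\to j$ becomes the batch of relabel nodes $(i,S)\to(j,S)$ over all $S\subseteq D$. A routine induction on $\sigma$ shows this is a legal clique-width expression that builds $H$, and that throughout, a vertex $v$ carries a label whose second coordinate is precisely $N_G(v)\cap D$ (the second coordinate is never altered, and union/join operations on refined labels reproduce exactly the adjacencies of $H$).

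Next I would attach $D$. Writing $D=\{u_1,\dots,u_d\}$, take the disjoint union of the expression built so far with $d$ singletons introducing $u_1,\dots,u_d$ under $d$ fresh labels $\ell_1,\dots,\ell_d$, and then, for each $j$, add join nodes pairing $\ell_j$ with $\ell_{j'}$ for every $\{u_j,u_{j'}\}\in E$ and pairing $\ell_j$ with every label $(i,S)$ such that $u_j\in S$. Since each $v\in V\setminus D$ lies in the unique class $(i, N_G(v)\cap D)$, the second family of joins connects $u_j$ to exactly $N_G(u_j)\setminus D$, the first family gives exactly the edges of $G[D]$, and nothing spurious is added. The resulting expression builds $G$ using $c\cdot 2^d + d$ labels; using $c\le 2^c$ and $d\le 2^d$ we get $\cw(G)\le c\cdot 2^d + d \le 2^{c+d}+2^{c+d}=2^{d+c+1}$.

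The main obstacle is purely bookkeeping: checking that replacing each individual operation of $\sigma$ by a bounded batch of operations over the refined label classes is a valid $c'$-expression and preserves all intended (and only the intended) adjacencies, together with verifying the invariant that the second coordinate of every label faithfully records $N_G(v)\cap D$ right up to the moment the vertices of $D$ are joined in.
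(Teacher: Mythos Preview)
Your proof is correct but takes a genuinely different route from the paper. The paper argues indirectly via rank-width: from $\cw(G[V\setminus D])\le c$ it gets $\rw(G[V\setminus D])\le c$, then uses that adding back one vertex raises rank-width by at most~$1$ to obtain $\rw(G)\le c+\dd_{\mathcal{C}}(G)$, and finally applies $\cw(G)\le 2^{\rw(G)+1}$. Your argument instead builds a clique-width expression for $G$ directly, by refining the labels of a $c$-expression for $G[V\setminus D]$ with the $2^{d}$ possible neighbourhoods into $D$ and then attaching the $d$ vertices of $D$ under fresh labels. What the paper's route buys is brevity: it is three lines of black-box citations. What your route buys is self-containment and constructiveness: you avoid rank-width entirely, and you actually exhibit a $(c\cdot 2^{d}+d)$-expression for $G$, which is a sharper intermediate bound than the paper obtains before applying the final inequality $c\cdot 2^{d}+d\le 2^{d+c+1}$.
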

\begin{proof}
We generalize the proof of Lemma 4.17 in \cite{Sorge2013}.
Let $D$ be a minimum $\mathcal{C}$-deletion set of $G$.
Then the clique-width of $G[V\setminus D]$ is at most $c$.
Since the rank-width of a graph is at most the clique-width \cite{Oum2006}, the rank-width $\rankw(G\setminus D)$ of $G[V\setminus D]$ is at most $c$.
It is known that deleting a vertex decreases the rank-width by at most $1$ \cite{Hlineny2007}, and hence the rank-width $\rw(G)$ of $G$ is at most $c+\dd_{\mathcal{C}}(G)$.
Because  $\cw(G)\le 2^{\rankw(G)+1}$ holds \cite{Oum2006}, we have $\cw(G)\le 2^{\dd_{\mathcal{C}}(G)+c+1}$.
\end{proof}

\begin{corollary}
For any graph $G$, $\cw(G)\le 2^{\db(G)+4} \le 2^{\dhd(G)+4}$ holds \revised{where $\db(G)$ is the block deletion number and $\dhd(G)$ is the distance-hereditary  deletion number.}
\end{corollary}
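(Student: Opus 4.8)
The plan is to obtain both inequalities of the chain as instances of Proposition~\ref{prop:clique-width:parameters}, applied to two nested graph classes, and then to compare the two resulting exponential bounds by monotonicity of $x\mapsto 2^x$. Thus the corollary reduces to two ingredients: (i) checking that block graphs and distance-hereditary graphs each have clique-width at most $3$, so that Proposition~\ref{prop:clique-width:parameters} may be invoked with constant $c=3$ in both cases; and (ii) comparing $\db(G)$ with $\dhd(G)$ through the inclusion between the two classes, which is what controls the relative size of the two bounds.

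For ingredient~(i), I would invoke the cited structural fact that distance-hereditary graphs have clique-width at most $3$ \cite{Golumbic2000,Courcelle2000}; since every block graph is distance-hereditary (as already recorded in the excerpt, which speaks of ``distance-hereditary graphs including block graphs''), the same bound $c=3$ holds for the class of block graphs. Substituting $c=3$ together with $\dd_{\mathcal{C}}(G)=\db(G)$ into the bound $\cw(G)\le 2^{\dd_{\mathcal{C}}(G)+c+1}$ of Proposition~\ref{prop:clique-width:parameters}, with $\mathcal{C}$ the class of block graphs, yields $\cw(G)\le 2^{\db(G)+3+1}=2^{\db(G)+4}$, which is the first inequality. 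Taking instead $\mathcal{C}$ to be the distance-hereditary graphs and $\dd_{\mathcal{C}}(G)=\dhd(G)$, the identical substitution with the same constant $c=3$ gives the companion bound $\cw(G)\le 2^{\dhd(G)+4}$.

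It remains to link the two bounds, i.e.\ to establish $2^{\db(G)+4}\le 2^{\dhd(G)+4}$. By strict monotonicity of the exponential this is equivalent to the deletion-number inequality relating $\db(G)$ and $\dhd(G)$, which I would read off from the containment of the two graph classes: a deletion set whose removal leaves one class also destroys every obstruction to membership in the other, so the two minima are ordered precisely according to which class is contained in which. This last point is the main obstacle, and the only step that is not a mechanical substitution into Proposition~\ref{prop:clique-width:parameters}: one must fix the direction of the inclusion between block graphs and distance-hereditary graphs, since it is exactly this inclusion that pins down the direction of the inequality between $\db(G)$ and $\dhd(G)$, and hence the order in which the two exponents $2^{\db(G)+4}$ and $2^{\dhd(G)+4}$ appear in the stated chain.
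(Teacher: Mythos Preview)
Your approach for the first inequality is exactly the paper's: both simply invoke Proposition~\ref{prop:clique-width:parameters} with $c=3$, using the fact from \cite{Golumbic2000} that distance-hereditary graphs (and hence block graphs) have clique-width at most~$3$. There is nothing to add there.

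The genuine gap is in the second inequality, and you are right to flag it as the only non-mechanical step---but you stop just short of actually resolving it, and had you carried it out you would have found that the direction fails. Since every block graph is distance-hereditary, any block deletion set $D$ leaves $G-D$ a block graph, hence in particular distance-hereditary; thus every block deletion set is a distance-hereditary deletion set, and the minimum sizes satisfy $\dhd(G)\le \db(G)$, not $\db(G)\le \dhd(G)$. Concretely, for $G=C_4$ one has $\dhd(C_4)=0$ (the $4$-cycle is distance-hereditary) but $\db(C_4)=1$, so $2^{\db(C_4)+4}=32>16=2^{\dhd(C_4)+4}$. The chain as printed in the corollary therefore cannot be proved; the intended statement is presumably $\cw(G)\le 2^{\dhd(G)+4}\le 2^{\db(G)+4}$, which your argument \emph{would} establish once the class inclusion is used in the correct direction. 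The paper's own one-line proof does not address the comparison between the two bounds at all, so this is a typo in the statement rather than a defect peculiar to your write-up.
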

\begin{proof}
The clique-width of a distance-hereditary graph is at most 3 \cite{Golumbic2000}.
\end{proof}
\section{Neighborhood Diversity}\label{sec:nd}
In this section, we prove Theorem \ref{thm:DkS:nd}.
\revised{Let ${\mathcal M}=\{M_1, \ldots, M_{\nd(G)}\}$ be a set of modules with neighborhood diversity $\nd(G)$.  The {\em type graph} $Q=(\mathcal{M}, E_Q)$ of $G$ is a graph such that each vertex is a module of $G$ and there is an edge between modules if and only if two modules are completely joined in $G$. }
We denote by $S$ the solution set.
%It can be computed in linear time  \cite{Lampis2012,McConnell1999,Tedder2008}.
Then we define a variable $x_i$ for each module $M_i$.
A variable $x_i$ represents the number of vertices in $S\cap M_i$.
Moreover, we define $m_i=|M_i|$ for  $M_i$.
Without loss of generality, we assume that $\mathcal{M}_C=\{M_1, \ldots, M_p\}$ is the set of modules that form cliques and $\mathcal{M}_I=\{M_{p+1}, \ldots, M_{\nd(G)}\}$ is the set of modules that are independent sets.

Then \textsc{Densest $k$-Subgraph} can be formulated as the following integer quadratic programming problem.
\begin{mpproblem}{IQP-D$k$S}
\label{mpprob:P}
\begin{align*}
\text{maximize} &\sum_{\{M_i,M_j\}\in E_Q}x_ix_j + \sum_{i\in \{1,\ldots,p\}}x_i(x_i-1)/2   \\
 \text{subject to}  &\sum_{i\in\{1,\ldots,{\nd(G)}\}}x_i=k \\
 &0\le  x_i\le m_i \ \ \ \ \ \ \ \ \ \  \forall i\in\{1,\ldots,{\nd(G)}\} \\
& x_i\in {\mathbb Z} \ \ \ \ \ \ \ \  \ \ \ \ \ \ \ \ \ \ \forall i\in\{1,\ldots,{\nd(G)}\} 
\end{align*}
\end{mpproblem}
In the objective function of (IQP-D$k$S), the first term represents the sum of edges between modules in $G[S]$. The second term represents the sum of edges inside modules.
Also, the first constraint represents the size constraint.

Without loss of generality, we replace the objective function by: $$\sum_{ \{M_i,M_j\}\in E_Q}2x_ix_j + \sum_{i\in \{1,\ldots,p\}}x_i(x_i-1).$$

% Let ${\bf A}_Q$ be the adjacency matrix of the type graph $Q$. That is, ${\bf A}_Q$ is the matrix such that an element $a_{ij}$ of ${\bf A}$ is $1$ if $\{M_i, M_j\}\in E_Q$, and otherwise $0$.
% Also, let ${\bf D}_Q$ be the $\nd(G) \times \nd(G)$ diagonal matrix such that $d_{ii}=1$ for $i\in \{1, \ldots, p\}$ and all other elements take $0$.
% Then we set ${\bf Q}={\bf A}_Q + {\bf D}_Q$.

% We define vectors ${\bf x}=[x_1,\ldots,x_{\nd(G)}]$ and ${\bf 1}_j=[1,\ldots,1, 0, \ldots, 0]$ such that the 1st to $j$th elements take 1 and the others take 0. 
% Then we can rewrite the objective function as: 
% \begin{align*}
%     {\bf x}^{T}{\bf Q}{\bf x} - {\bf 1}_p^T{\bf x}
% \end{align*}

% Let ${\bf I}$ be an $\nd(G)\times \nd(G)$ identity matrix and ${\bf m}=[m_1,\ldots,m_{\nd(G)}]$.
% Then, the second constraint can be rewritten as:
% \begin{align*}
%     \left[
%     \begin{array}{c}
%     {\bf I} \\
%     -{\bf I}
%         \end{array}
%   \right]{\bf x}\le
%   \left[
%       \begin{array}{c}
%     {\bf m} \\
%     {\bf 0}
%         \end{array}
%  \right].
% \end{align*}
% Finally, the first constraint can be rewritten as ${\bf 1}_{\nd(G)}^T{\bf x}=k$.

It is easily seen that (IQP-D$k$S) can be represented as \textsc{Integer Quadratic Programming} such that the  largest  absolute  value  of  entries  of matrices in  (IQP-D$k$S) is 1.
The number of variables in (IQP-D$k$S) is ${\nd(G)}$ and the number of constraints is $2{\nd(G)}+1$. 
By applying Theorem \ref{thm:IQP:fpt}, we complete the proof of Theorem \ref{thm:DkS:nd}.
Furthermore, the minimization version of (IQP-D$k$S) is the IQP formulation of 
\textsc{Sparsest $k$-Subgraph}. 
Thus, \textsc{Sparsest $k$-Subgraph} is also fixed-parameter tractable when parameterized by neighborhood diversity.  

\section{Block Deletion Number}\label{sec:block}
In this section, we prove Theorem \ref{thm:DkS:db}. 
% \color{red}
% We first show that given  a $\mathcal{C}$-deletion set $D$ of $G$ where , \textsc{Densest (Sparsest) $k$-Subgraph} can be solved in time $2^{|D|}n^{O(1)}$. Thus, if a minimum $\mathcal{C}$-deletion set $D$ can be computed in time $f(|D|)n^{O(1)}$, then \textsc{Densest (Sparsest) $k$-Subgraph} is solvable in time $\max\{f(|D|),2^{|D|}\}n^{O(1)}$ and it is fixed-parameter tractable when parameterized by $\mathcal{C}$-deletion number. 
\revised{To show this, we define \textsc{Densest (Sparsest) $k$-Subgraph with Weighted Vertices} as an 
auxiliary problem.
\medskip
\begin{breakbox}
\noindent \textsc{Densest (Sparsest) $k$-Subgraph with Weighted Vertices}\\
\noindent {\bf Input:} A graph $G=(V,E)$ with vertex weight $w_v$ and an integer $k$.\\
\noindent {\bf Output:} A vertex set $V'$ of size $k$ that maximizes (minimizes) $|\{\{u,v\}\in E\mid u,v\in V'\}|+\sum_{v\in V'}w_v$.
\end{breakbox}
\medskip
Note that if $w_v=0$ for every $v\in V$, \textsc{Densest (Sparsest) $k$-Subgraph with Weighted Vertices} is equivalent to \textsc{Densest (Sparsest) $k$-Subgraph}.
}

%First, we prove that for any class $\mathcal{G}$, the following lemma holds.
\begin{lemma}\label{lem:Deletion_Set}
Let $\mathcal{G}$ be a graph class and $D$ be a $\mathcal{G}$-deletion set.
%  If \textsc{Densest (Sparsest) $k$-Subgraph with Weighted Vertices} can be solved in time $(nw_{\max})^{O(1)}$ on graphs in $\mathcal{G}$ where $w_{\max}$ is the maximum weight of vertices, then \textsc{Densest (Sparsest) $k$-Subgraph} is solvable in time $2^{|D|}n^{O(1)}$ when $D$ is given.
 If \textsc{Densest (Sparsest) $k$-Subgraph with Weighted Vertices} can be solved in time $T$, then \textsc{Densest (Sparsest) $k$-Subgraph} is solvable in time $O(2^{|D|}(|D|n + T))$ when $D$ is given. In particular, if $T=(nw_{\max})^{O(1)}$, it can be solved in time $2^{|D|}n^{O(1)}$.
\end{lemma}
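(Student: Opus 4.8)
The plan is to branch over all possible intersections of the sought solution with the deletion set $D$, turning each branch into a single call to the assumed algorithm for \textsc{Densest (Sparsest) $k$-Subgraph with Weighted Vertices} on the graph $G[V\setminus D]$, which by definition lies in $\mathcal{G}$. The first step is the edge-counting decomposition: for a candidate solution $S$ of size $k$, writing $D':=S\cap D$ and $V':=S\setminus D\subseteq V\setminus D$, the number of edges of $G[S]$ splits into the edges inside $D'$, the edges between $D'$ and $V'$, and the edges inside $V'$. The first quantity depends only on $D'$, and the number of edges joining $D'$ to a fixed vertex $v\in V\setminus D$ is exactly $|N(v)\cap D'|$, so the cross term equals $\sum_{v\in V'}|N(v)\cap D'|$.

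Next, for each of the $2^{|D|}$ subsets $D'\subseteq D$ with $|D'|\le k$ and $k-|D'|\le|V\setminus D|$, I would assign to every $v\in V\setminus D$ the weight $w_v:=|N(v)\cap D'|$ and invoke the oracle for the weighted problem on $G[V\setminus D]$ with target size $k-|D'|$. On a returned set $V'$ its objective value equals the number of edges inside $V'$ plus $\sum_{v\in V'}|N(v)\cap D'|$, that is, the edges inside $V'$ plus the edges between $V'$ and $D'$; adding the constant $e_{D'}$, the number of edges inside $D'$, yields precisely $|E(G[D'\cup V'])|$. Taking the best value over all branches (after adding the respective constants $e_{D'}$) gives the optimum: every branch outputs a genuine size-$k$ set, and the branch $D'=S^{*}\cap D$ for an optimal solution $S^{*}$ recovers a value at least $|E(G[S^{*}])|$, since $S^{*}\setminus D$ is a feasible choice there. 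The \textsc{Sparsest} case is identical with ``maximize'' replaced by ``minimize'' throughout.

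For the running time, computing all weights $w_v$ for one fixed $D'$ takes $O(|D|n)$ time and $e_{D'}$ can be obtained within $O(|D|^{2})$, after which the oracle costs $T$; summing over the $2^{|D|}$ branches gives $O(2^{|D|}(|D|n+T))$, as claimed. For the final assertion, observe that in every constructed instance $w_{\max}=\max_{v}|N(v)\cap D'|\le|D|\le n$, so the bound $T=(nw_{\max})^{O(1)}$ becomes $n^{O(1)}$ and the total running time is $2^{|D|}n^{O(1)}$.

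I do not expect a serious obstacle here: the lemma follows once one spots that all edges from $V\setminus D$ into the guessed part $D'$ can be charged to vertex weights. The only points requiring care are making the three-way edge decomposition precise so that the cross-edges are fully absorbed by the $w_v$'s (and not double-counted against $e_{D'}$), and the uniform bound $w_{\max}\le|D|$ that collapses $(nw_{\max})^{O(1)}$ to $n^{O(1)}$.
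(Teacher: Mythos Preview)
Your proposal is correct and follows essentially the same approach as the paper: guess the intersection $D'=S\cap D$, encode the cross edges as vertex weights $w_v=|N(v)\cap D'|$ on $V\setminus D$, and call the weighted oracle on $G[V\setminus D]$ with target $k-|D'|$. If anything, your write-up is more careful than the paper's---you make the three-way edge decomposition explicit and remember to add the constant $e_{D'}$, whereas the paper's proof glosses over this and even writes ``neighbors in $D$'' where ``neighbors in $S$'' is meant.
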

\begin{proof}
Given a $\mathcal{G}$-deletion set $D$ of $G$,
we first guess $2^{|D|}$ candidates of partial solutions in $G[D]$.
For each candidate $S\subseteq D$, we define the weight $w_v$ of $v\in V\setminus D$ as the number of neighbors in $D$.
The weight $w_v$ for $v\in V\setminus D$ means the number of additional edges of the solution when $v$ is added to the solution $S$.
Every weight can be computed in time $O(|D| n)$.
Let $k'=k-|S\cap D|$.
Then we solve \textsc{Densest (Sparsest) $k'$-Subgraph with Weighted Vertices} with respect to $w_v$ in $G[V\setminus D]$.
Thus, the total running time is $O(2^{|D|}(|D|n + T))$.

In \textsc{Densest (Sparsest) $k'$-Subgraph with Weighted Vertices} with respect to $w_v$ in $G[V\setminus D]$, $w_{\max}=\max_{v\in {V\setminus D}}w_v$ is bounded by $|D|\le n$, and hence it can be solved in time $T=(nw_{\max})^{O(1)}=n^{O(1)}$.
Therefore, the total running time is $2^{|D|}n^{O(1)}$.
\end{proof}

\color{black}

We then show that \textsc{Densest $k$-Subgraph with Weighted Vertices}  can be solved in polynomial time  on block graphs.

\revised{
\begin{figure}[tbp]
    \centering
    \includegraphics[width=12cm]{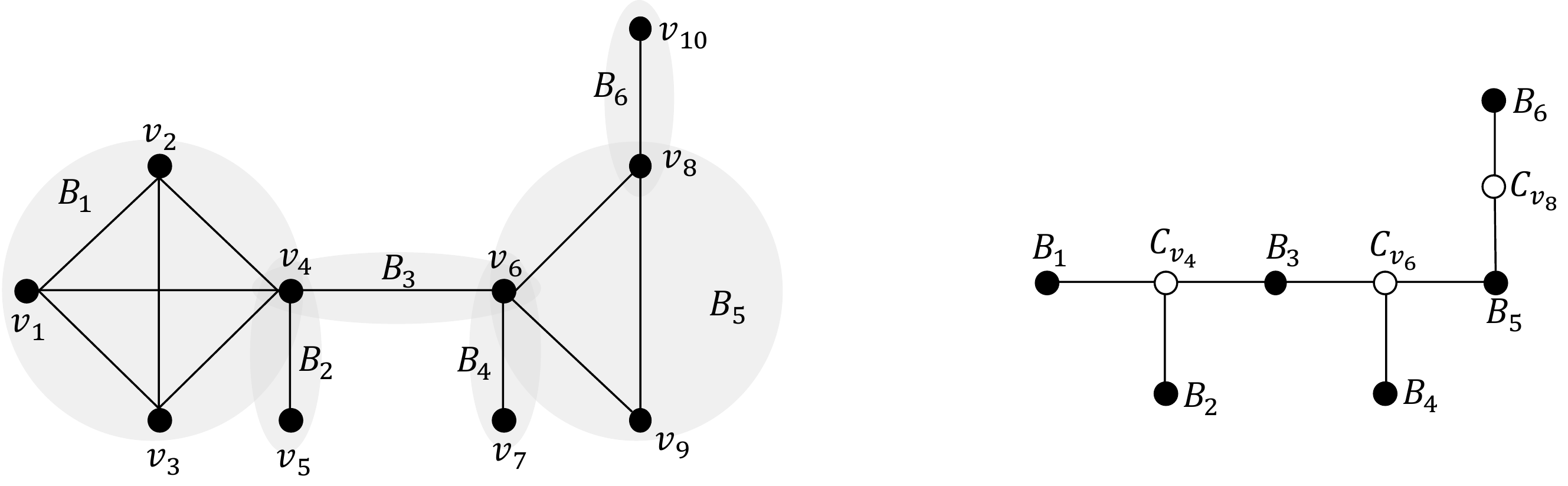}
    \caption{An example of a block graph (left) and its block cut tree (right). In the block-cut tree, the black nodes represent block nodes and the white nodes represent cut nodes.}
    \label{fig:block-cut-tree}
\end{figure}
Suppose that a block graph $G$ is connected.
A vertex $v$ is called a {\em cut vertex} if  $G[V\setminus \{v\}]$ has at least two components.
By the definition of a block graph, two blocks share exactly one cut vertex.
Let ${\mathcal B}=\{B_1,\ldots, B_{\beta}\}$  be the set of blocks where $\beta$ is the number of 2-connected components.
For each cut vertex $v$, we define $C_v=\{v\}$. Let ${\mathcal C}=\{C_{v_1},\ldots, C_{v_\gamma}\}$ where $\gamma$ is the number of cut vertices and let ${\mathcal X}={\mathcal B}\cup {\mathcal C}$.
Then the {\em block-cut tree} of $G$ is a tree ${\mathcal T}=({\mathcal X}, {\mathcal E})$ where $\mathcal{E}=\{\{B_i,C_v\}\mid B_i\cap C_v=\{v\}, B_i\in \mathcal{B}, C_v\in \mathcal {C}\}$ (see Figure \ref{fig:block-cut-tree}).
In the block-cut tree, $B_i$ is called a {\em block node} and $C_v$ is called a {\em cut node}.
We denote by $B_r$ the {\em root} node in ${\mathcal T}$.
The block-cut tree of $G$ can be computed in linear time \cite{Tarjan1972}.
}

\begin{lemma}\label{lem:DkSWV:block}
\textsc{Densest $k$-Subgraph with Weighted Vertices} on block graphs can be solved in time $O(k^3n+m)$.
\end{lemma}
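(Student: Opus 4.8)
The plan is to set up a bottom-up dynamic program on the block-cut tree ${\mathcal T}$ of $G$, using two structural facts: in a block graph every block is a clique, and every edge of $G$ lies in exactly one block. Consequently, for any vertex set $V'$ the number of induced edges equals $\sum_{B\in{\mathcal B}}\binom{|V'\cap B|}{2}$, so the objective splits into a sum of per-block contributions (each a function of how many of the block's vertices are taken) plus the additive weights $\sum_{v\in V'}w_v$. I would assume $G$ is connected — a disconnected input is solved component-by-component and the per-component tables merged by one more knapsack over $k$, and the $K_1$ case is trivial — and root ${\mathcal T}$ at an arbitrary block node $B_r$; then every leaf of ${\mathcal T}$ is a block node and every cut node has at least one child.

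The DP keeps, for each node of ${\mathcal T}$, a table indexed by the number $j\le k$ of solution vertices taken inside that node's subtree. For a cut node $C_v$ it is additionally indexed by a bit $b\in\{0,1\}$ recording whether $v$ itself is taken, and $f_{C_v}[j][b]$ stores the best attainable value (in-block edges among chosen vertices plus weights of chosen vertices, with $v$ counted once) over the subtree. For a block node $B$ with parent cut vertex $v$ the table $g_B[j][b]$ is built through an auxiliary table over pairs $(j,c)$, where $c$ counts the chosen vertices lying \emph{directly} in $B$; the in-$B$ edge term $\binom{c}{2}$ depends on $c$ and not on $j$, so it is added once at the end, $g_B[j][b]=\max_c\bigl(\mathrm{aux}[j][c]+\binom{c}{2}\bigr)$. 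The auxiliary table is a knapsack over the ``items'' of $B$: the parent cut vertex $v$ (forced according to $b$, contributing $w_v$ and incrementing both $j$ and $c$ when $b=1$), each child cut vertex $u_i$ (taking it pulls in the whole table $f_{C_{u_i}}[\cdot][1]$ and increments $c$, while not taking it uses $f_{C_{u_i}}[\cdot][0]$), and each private (non-cut) vertex of $B$, entered as an ordinary $0/1$ item with its own weight — so no sorting is needed, since the DP automatically keeps the best value per $(j,c)$. Merging the child tables $g_{B_1},\dots,g_{B_m}$ at a cut node $C_v$ is a knapsack over $j$; in the branch $b=1$ each $g_{B_i}[\cdot][1]$ already charges $w_v$ and counts $v$, so one subtracts $(m-1)w_v$ and shifts the index down by $m-1$. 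The answer is $g_{B_r}[k]$ (at the root there is no parent cut vertex, hence no bit $b$).

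For the running time, building ${\mathcal T}$ together with the classification of vertices and their weights costs $O(n+m)$. A block node $B$ has at most $|B|$ items and an $O(k^2)$-entry table, and merging one item costs $O(k)$ per entry, so $O(k^3|B|)$ per block; since $\sum_{B}|B|=O(n)$ for a block-cut tree, this sums to $O(k^3 n)$. Each cut-node merge costs $O(k^2)$ per incident tree edge, i.e.\ $O(k^2 n)$ in total, which is dominated. Hence the overall running time is $O(k^3 n + m)$.

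The step that needs care — routine bookkeeping rather than a new idea — is the treatment of cut vertices, each of which sits in several block nodes and one cut node of ${\mathcal T}$: one must fix a single convention (here, charge a cut vertex's weight and its $+1$ to the subtree count on the parent side, and repair the over-count while merging the children of $C_v$), and then verify that every edge and every weight is counted exactly once and that the shared bit $b$ agrees on both sides of every tree edge. The secondary point to get right is that the block contribution $\binom{c}{2}$ must key off $c$, the vertices taken directly in the block, rather than off $j$, which is why the block-node table is two-dimensional; once $c$ is tracked, adding $\binom{c}{2}$ at the end is immediate.
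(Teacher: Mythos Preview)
Your proposal is correct and follows essentially the same approach as the paper: a bottom-up dynamic program on the block-cut tree with a bit recording whether the parent cut vertex is selected, an auxiliary index at each block counting how many of that block's vertices are chosen (so the $\binom{c}{2}$ clique contribution can be added), and knapsack-style merges at both block and cut nodes, with the same $O(k^3n+m)$ accounting via $\sum_B|B|=O(n)$. The only cosmetic differences are that the paper sorts the private vertices of a leaf block by weight rather than folding them in as $0/1$ items, and it repairs the double-count of a cut vertex by the index shift $\ell'+\ell''=\ell+1$ at each pairwise merge rather than by your global subtraction of $(m-1)w_v$ and shift by $m-1$.
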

\begin{proof}
Suppose that a block graph $G$ is connected.
For each node $i$ in a block-cut tree, we define the set of vertices $V_i\subseteq V$ as the union of all nodes $B_j$ such that $j=i$ (if node $i$ is a block node) or $j$ is a descendant of $i$.
Note that $V=V_r$.

Our algorithm is based on dynamic programming on the block-cut tree $\TT=(\XX,\EE)$ of $G$.
Except for the root node,  we suppose that a block node has its parent cut node $C_v=\{v\}$ and we call $v\in C_v$ the {\em parent cut vertex}. Note that the parent node of a block node is a cut node.
For $p\in \{0,1\}$ and $\ell\in \{0, \ldots, k\}$, let $A[i,p,\ell]$ be the maximum value  of $\ell$-subgraphs in $G[V_i]$.
The value $p$ indicates whether the parent cut vertex is contained in the solution or not.
If $p=0$, the parent cut vertex  is not contained in the solution, and otherwise, it is contained.
In the root node $B_r$, which is a block node,  $A[r,\ell]$ is defined as the maximum value  of $\ell$-subgraphs on a block graph.
Thus, $A[r,k]$ is the optimal value of \textsc{Densest $k$-Subgraph with Weighted Vertices}.  
Here, we define $A[i,p,\ell]=-\infty$ as an invalid case.

In the following, we define the recursive formulas for each node on a block cut tree.

\medskip
\noindent{\bf Leaf block node: }
In a leaf block node $B_i$, let $v\in B_i$ be the parent cut vertex.
Without loss of generality, we suppose \revised{$B_i\setminus \{v\}=\{u_1, u_2,\ldots, u_{|B_i|-1}\}$ where $w_{u_1}\ge w_{u_2}\ge \ldots \ge w_{u_{|B_i|-1}}$.}
For $\ell\ge 1$,  we define $a[\ell]=\sum_{i=1}^{\ell} w_{u_i}$, which is the sum of weights of the  top $\ell$ vertices in the order of weights in $B_i\setminus \{v\}$. For $\ell=0$,  we set $a[0]=0$.
% If $p=0$, we do not add $v$ to the solution .
% Then we set $A_v^p[i,\ell]=-\infty$ if $p=1$ and $\ell=0$, and otherwise $A_v^p[i,\ell]=\ell(\ell-1)/2 + a_v^p[\ell]$ for $p\in \{0,1\}$.
Then $ A[i,p,\ell]$ can be defined as follows: 
\begin{align*}
 A[i,p,\ell]=\begin{cases} \ell(\ell-1)/2  +  a[\ell] & \mbox{if $p=0$ and $0\le \ell\le |B_i|-1$}\\
 \ell(\ell-1)/2 + a[\ell-1] + w_v & \mbox{if $p=1$ and $1\le \ell\le |B_i|-1$}\\
-\infty& \mbox{otherwise}
 \end{cases}.
\end{align*}
The first case is when the parent cut vertex $v$ is not contained in the solution, and the second case is when $v$ is contained in the solution.

%\paragraph*{Internal cut node: }
\medskip
\noindent{\bf Internal cut node: }
In an internal cut node $C_{v_i}=\{v_i\}$, we compute  the maximum value $A[v_i,p,\ell]$ of  $\ell$-subgraphs  in $G[V_{v_i}]$.
If $p=1$, the solution contains the cut vertex $v_i$, and otherwise it does not.
Suppose that the internal cut node $C_{v_i}$ has $t$ subtrees in $\mathcal{T}$.
Let $T_1, \ldots, T_t$ be the subtrees of $C_{v_i}$ whose root nodes are $C_{v_i}$'s children $B_{j_1}, \ldots, B_{j_t}$.

Using dynamic programming, we compute $A[v_i,p,\ell]$. 
For $p\in \{0,1\}$, $s\in \{1, \ldots, t\}$, and $\ell\in \{0,\ldots, k\}$, we compute $c_{v_i}[s,p,\ell]$ that represents the maximum value of  $\ell$-subgraphs  in the subgraph induced by vertices in the subtrees  $T_1, \ldots, T_s$.
Note that $A[v_i,p,\ell]=c_{v_i}[t,p,\ell]$.
At the base case, we set $c_{v_i}[1,p,\ell]=A[j_1,p,\ell]$ for each $p$ and $\ell$.
Then $c_{v_i}[s,p,\ell]$ can be recursively computed by using the following formula:
\begin{align*}
c_{v_i}[s,p,\ell]=\begin{cases} \max_{\ell'+\ell''=\ell}\{ A[j_s,p,\ell']+c_{v_i}[s-1,p,\ell'']\} & \mbox{if $p=0$}\\
\max_{\ell'+\ell''=\ell+1}\{ A[j_s,p,\ell']+c_{v_i}[s-1,p,\ell'']\} & \mbox{if $p=1$}
\end{cases}.
\end{align*}
In the case of $p=1$, we have to consider the double counting of $v$, and hence we set $\ell'+\ell''=\ell+1$.
%Finally, we set $c_v[p,\ell]=c_v[t,p,\ell]$.
Finally, we set $A[v_i,p,\ell]=c_{v_i}[t,p,\ell]$

%\paragraph*{Internal block node: }
\medskip
\noindent{\bf Internal block node: }
In an internal block node $B_i$, let $B_i\setminus \{v\}=\{u_1, \ldots, u_{|B_i|-1}\}$.
Here, for $u\in B_i\setminus \{v\}$ we set:
\begin{align*}
w'_u[p,\ell]=\begin{cases} A[u,p,\ell] & \mbox{if  $u$ is a cut vertex}\\
w_u & \mbox{if $p=1$, $\ell=1$, and $u$ is not a cut vertex}\\
0 & \mbox{if $p=0$ and $\ell=0$, and $u$ is not a cut vertex}\\
-\infty & \mbox{otherwise}
\end{cases}.
\end{align*}
In a sense,  \revised{$w'_u[p,\ell]$} is the new weight of $u$ taking into account the maximum value of  $\ell$-subgraphs in the subtree having $C_u=\{u\}$ as the root.

For $\alpha\in \{0,\ldots, \ell\}$, we  compute $a[j,\alpha,\ell]$, which represents the maximum value of $\ell$-subgraphs that contains $\alpha$ vertices among $\{u_1,\ldots, u_j\}$ and $\ell-\alpha$ vertices in the subtrees of $B_i$.
Here, for any $j$ and $\ell$, we set $a[j,\alpha,\ell]=-\infty$ if $\alpha=-1$. 
For $j=1$, we set:
\begin{align*}
a[j,\alpha,\ell]=\begin{cases}w'_{u_1}[0,\ell] & \mbox{if $\alpha=0$}\\
w'_{u_1}[1,\ell]  & \mbox{if $\alpha=1$}\\
-\infty & \mbox{otherwise}\\
\end{cases}.
\end{align*}
The first case is when $u_1$ is not contained in the solution and the second case is when $u_1$ is in the solution.

For $j\in \{2, \ldots, |B_i|-1\}$ and $0\le \alpha\le \ell$, we compute $a[j,\alpha,\ell]$ by using the following recursive formula:
\begin{align*}
a[j,\alpha,\ell]= \max_{\ell'+\ell''=\ell}\max \left\{ 
\begin{array}{ll}
a [j-1,\alpha,\ell']+w'_{u_{j}}[0,\ell''],\\ a[j-1,\alpha-1,\ell']+w'_{u_{j}}[1, \ell'']
\end{array}\right\}.
\end{align*}
The first case is when $u_j$ is not contained in the solution and the second case is when $u_j$ is in the solution.
%For each $\alpha >\ell$, we set $a[j,\alpha,\ell]= -\infty$ as an invalid case.
After the calculation, we obtain $a[{|B_i|-1},\alpha,\ell]$ for every $\alpha$ and $\ell$.
Let $a[\alpha,\ell]=a[|B_i|-1,\alpha,\ell]$.
%be the maximum value when $\alpha$ vertices in $B_i\setminus \{v\}$ and $\ell-\alpha$ vertices in their subtrees are contained in the solution.
Finally, the maximum value  of $\ell$-subgraphs on a node $B_i$ can be computed as follows:
\revised{\begin{align*}
A[i,p,\ell]=\begin{cases}\max_{0\le \alpha \le \ell} \left\{a[\alpha,\ell] + \alpha(\alpha-1)/2 \right\}  & \mbox{if $p=0$}\\
\max_{0\le \alpha \le\ell-1} \left\{a[\alpha,\ell-1] + \alpha(\alpha+1)/2 \right\} + w_v & \mbox{else}
\end{cases}.
\end{align*}}
The first case is when the parent cut vertex $v$ is not contained in the solution and the second case is when $v$ is in the solution.
Note that $G[B_i]$ forms a clique.

%\paragraph*{Root node: }
\medskip
\noindent{\bf Root node: }
In the root node $X_r$, we compute $a[r,\alpha,\ell]$ that represents the maximum value of $\ell$-subgraphs that contains $\alpha$ vertices in $B_r$ and $\ell-\alpha$ vertices in the subtrees of $B_r$ by the same way as internal block nodes.
Then, we set $A[r,\ell] = \max_{0\le \alpha \le \ell }  \{a[r,\alpha,\ell] + \alpha(\alpha-1)/2\}$ as the maximum value of $\ell$-subgraph in $G=G[V_r]$.
Finally, we obtain the maximum value  $A[r,k]$ of  $k$-subgraphs on $G$. 

\bigskip
In the algorithm, we first compute the block-cut tree of $G$ in linear time \cite{Tarjan1972}.
% Then we build three DP tables $A[i,p,\ell]$ and  $a[j,\alpha,\ell]$, and $c'_u[p,\ell]$ of size  $O(k)$  on each block node $B_i$. 
% Also, we  build a DP table $c_u[j,p,\ell]$ of size  $O(k)$ in each cut node $C_v$.
Since there are at most $n$ nodes in a block graph, the number of $A[i,p,\ell]$'s is $O(kn)$.
For $a[j,\alpha,\ell]$, each $j$ corresponds to a vertex in a node and it appears exactly once.
Thus, the number of  $a[j,\alpha,\ell]$'s is  $O(k^2n)$.
For each $u\in B_i\setminus \{v\}$, $w'_u[p,\ell]$ is defined.  Thus, the number of $w'_u[p,\ell]$'s  is $O(kn)$.
Finally,   the number of children nodes for all $C_v$'s is at most the number of block nodes. Therefore, the total number of $c_v[j,p,\ell]$'s for all $C_v$'s is bounded by $O(kn)$.

Since for fixed $p$, $\ell$, and $\alpha$, $A[i,p,\ell]$, $a[j,\alpha,\ell]$, $c_u[j,p,\ell]$, and $w'_u[p,\ell]$ can be computed in time $O(k)$,
we can compute $A[r,k]$ in time $O(k^3n)$.

It is easily seen that the algorithm can be applied to the case that $G$ is not connected.
We first compute the optimal value of \textsc{Densest $\ell$-Subgraph with Weighted Vertices} for every $\ell$ on each connected component of $G$.
Then we again use dynamic programming.
Let $G_1, \ldots, G_t$ be the connected components of $G$.
For $i\in \{1, \ldots, t\}$, we denote by $A_i[r,\ell]$ the optimal value of \textsc{Densest $\ell$-Subgraph with Weighted Vertices} on $G_i$.
Let $B[i,\ell]$ be the maximum value of $\ell$-subgraphs on the union of $G_1, \ldots, G_i$.
At the base case, we set $B[1,\ell] = A_1[r,\ell]$ for each $0\le \ell\le k$.
Then for each $i\in \{1, \ldots, t\}$ and $\ell\in \{0,\ldots,k\}$, we set $B[i,\ell] = \max_{0\le \ell' \le \ell} \{B[i-1,\ell'] + A_i[r,\ell-\ell']\}$.
Clearly, $B[t,\ell]$ is the maximum value of $\ell$-subgraphs on $G$ and it can be computed in time $O(kt)$. 
Since $t\le n$, the total running time is $O(k^3n+m)$.
\end{proof}

By using Lemma \ref{lem:DkSWV:block}, we complete the proof of Theorem \ref{thm:DkS:db}.
% We are given a graph $G=(V,E)$ and a minimum block deletion set $D$ of size $\db(G)$ in $G$.
% We first guess $2^{\db(G)}$ candidates of solutions in $G[D]$.
% For each candidate $S\subseteq D$, we define the weight $w_v$ of $v\in V\setminus D$ as the number of neighbors in $S$.
% Every weight can be computed in $O(\db(G) n)$.
% Let $k'=k-|S|$.
% Then we solve \textsc{Densest $k'$-Subgraph with Weighted Vertices} with respect to $w_v$ in $G[V\setminus D]$.
% The total running time is $O(2^{\db(G)}((k^3+\db(G))n + m))$.
% It is easily seen that the algorithm could work for  \textsc{Sparsest $k$-Subgraph}  by a simple modification.

Since \textsc{Block Vertex Deletion} and \textsc{Cluster Vertex Deletion} can be computed in time $4^{\db(G)}n^{O(1)}$ \cite{Agrawal2016} and $O(1.9102^{\dc(G)}\dc(G)(n + m))$ \cite{Boral2016}, respectively, we also obtain the following corollary.
\begin{corollary}
\textsc{Densest $k$-Subgraph} and \textsc{Sparsest $k$-Subgraph} can be solved in time   $4^{\db(G)}n^{O(1)}$ and $O(2^{\dc(G)} \dc(G) (k^3n +m))$ where $\db(G)$ and $\dc(G)$ are the block deletion number and the cluster deletion number of an input graph, respectively.
\end{corollary}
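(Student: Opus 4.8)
The plan is to assemble the corollary from three pieces that are already in place: the FPT branching algorithms that actually \emph{produce} the deletion sets, the reduction of Lemma~\ref{lem:Deletion_Set} which turns a weighted-vertex algorithm on the deletion class into an algorithm for \textsc{Densest (Sparsest) $k$-Subgraph}, and the polynomial-time dynamic program of Lemma~\ref{lem:DkSWV:block} solving \textsc{Densest $k$-Subgraph with Weighted Vertices} on block graphs. Nothing new needs to be invented; the work is in wiring these together and tracking the running times.

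First, for the block deletion number: run the algorithm of Agrawal et al.~\cite{Agrawal2016} to compute a block graph deletion set $D$ with $|D| = \db(G)$ in time $4^{\db(G)}n^{O(1)}$, then invoke Lemma~\ref{lem:Deletion_Set} with $\mathcal{G}$ the class of block graphs and with Lemma~\ref{lem:DkSWV:block} (for \textsc{Densest}) or its minimization analogue (for \textsc{Sparsest}) as the weighted-vertex subroutine, so that $T = O(k^3 n + m)$. Since in each recursive call the weights satisfy $w_{\max} \le |D| \le n$, the bound of Lemma~\ref{lem:Deletion_Set} collapses to $2^{\db(G)} n^{O(1)}$, and adding the cost of computing $D$ gives the claimed $4^{\db(G)}n^{O(1)}$.

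Second, for the cluster deletion number: the key observation is that a cluster graph (a disjoint union of cliques) is a block graph — indeed each connected component is a single block — so Lemma~\ref{lem:DkSWV:block} already solves \textsc{Densest (Sparsest) $k$-Subgraph with Weighted Vertices} on cluster graphs in time $O(k^3 n + m)$. I would compute a minimum cluster vertex deletion set $D$ with $|D| = \dc(G)$ via Boral et al.~\cite{Boral2016} in time $O(1.9102^{\dc(G)}\dc(G)(n+m))$, apply Lemma~\ref{lem:Deletion_Set} to get running time $O(2^{\dc(G)}(\dc(G) n + k^3 n + m))$, and then absorb $\dc(G) n + k^3 n + m$ into $\dc(G)(k^3 n + m)$ (valid for $k, \dc(G) \ge 1$) while noting that $1.9102^{\dc(G)}$ is dominated by $2^{\dc(G)}$; this yields $O(2^{\dc(G)}\dc(G)(k^3 n + m))$.

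I do not expect a genuine obstacle here, only one point that deserves an explicit remark rather than a citation: Lemma~\ref{lem:DkSWV:block} as stated covers only the maximization (\textsc{Densest}) version, so for the \textsc{Sparsest} halves I would point out that the identical block-cut-tree dynamic program computes minimum-weight $\ell$-subgraphs verbatim after replacing every $\max$ by $\min$ — the clique-edge contributions such as $\alpha(\alpha-1)/2$ survive unchanged, they are simply now being minimized — with the same $O(k^3 n + m)$ running time. Everything else is routine bookkeeping of the time bounds above.
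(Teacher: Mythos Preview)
Your proposal is correct and follows exactly the paper's (one-line) argument: the corollary is stated immediately after the sentence ``Since \textsc{Block Vertex Deletion} and \textsc{Cluster Vertex Deletion} can be computed in time $4^{\db(G)}n^{O(1)}$ \cite{Agrawal2016} and $O(1.9102^{\dc(G)}\dc(G)(n + m))$ \cite{Boral2016}, respectively, we also obtain the following corollary,'' and you have simply unpacked this by combining those FPT deletion-set algorithms with Theorem~\ref{thm:DkS:db} via Lemmas~\ref{lem:Deletion_Set} and~\ref{lem:DkSWV:block}. Your additional remarks---that cluster graphs are block graphs, the running-time arithmetic, and the $\max \to \min$ observation for the \textsc{Sparsest} variant---are all correct elaborations that the paper leaves implicit.
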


\section{Generalization to Bounded Clique-width Deletion Number}

In this section, we generalize the results in Section \ref{sec:block}.
Let $\mathcal{C}$ be a class of bounded clique-width graphs.
Because the clique-width of a block graph is at most 3, $\mathcal{C}$ includes block graphs.

We show that \textsc{Densest (Sparsest) $k$-Subgraph with Weighted Vertices} can be solved in time  \revised{$k^{O(c)}n$}  on bounded clique-width graphs.
\begin{lemma}\label{lem:cw}
Given a $c$-expression tree $\TT$ with $O(n)$ nodes, \textsc{Densest (Sparsest) $k$-Subgraph with Weighted Vertices} can be solved in time \revised{$O((k+1)^{2(c+1)}n)$}.
\end{lemma}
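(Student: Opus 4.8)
The plan is to do dynamic programming over the $c$-expression tree $\TT$, keeping for each node a table indexed by (number of selected vertices, a vector recording how many selected vertices carry each of the $c$ labels). Concretely, for a node $t$ of $\TT$ let $G_t$ be the labeled graph it produces; I would define $D[t, \ell, (s_1,\dots,s_c)]$ to be the maximum (for \textsc{Densest}; minimum for \textsc{Sparsest}) of $|\{\{u,v\}\in E(G_t)\mid u,v\in V'\}| + \sum_{v\in V'} w_v$ over all $V'\subseteq V(G_t)$ with $|V'|=\ell$ and exactly $s_i$ vertices of $V'$ labeled $i$, and $-\infty$ (resp.\ $+\infty$) if no such $V'$ exists. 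Since $\sum_i s_i = \ell \le k$, the number of label-count vectors with a given $\ell$ is at most $\binom{\ell+c-1}{c-1} \le (k+1)^{c-1}$ (or more crudely $(k+1)^c$), so each node stores $O((k+1)^{c+1})$ entries — actually I will bound it by $O((k+1)^{2(c+1)})$ to leave slack for the transitions, matching the stated running time.

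Next I would give the transition at each of the four node types. At an \textbf{introduce} node creating a single vertex $v$ with label $i$: the only nonempty choices are $V'=\emptyset$ (value $0$, vector $\0$) or $V'=\{v\}$ (value $w_v$, vector $e_i$), everything else $-\infty$; this is $O(c)$ work. At a \textbf{relabel} node (relabel $i\to j$): $D'[t,\ell,\s]$ is obtained by summing, over all source vectors $\s'$ that map to $\s$ under the relabeling, the max/min of the child's entries — since only coordinates $i,j$ are affected, this is a simple aggregation costing $O((k+1)^{c+1})$. At a \textbf{join} node (add all edges between label-$i$ and label-$j$ vertices): no new vertex is selected, but each selected set $V'$ with label-counts $\s$ gains exactly $s_i\cdot s_j$ new edges, so $D[t,\ell,\s] = D[\text{child},\ell,\s] + s_i s_j$; note $s_is_j$ depends only on $\s$, so this is again $O((k+1)^{c+1})$ work. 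The interesting case is the \textbf{union} node with children $t_1,t_2$: here $V' = V'_1\cup V'_2$ disjointly, the edge count and weight sum are additive across the two sides, and the label-count vectors add, so $D[t,\ell,\s] = \max_{\ell_1+\ell_2=\ell,\ \s_1+\s_2=\s}\bigl(D[t_1,\ell_1,\s_1] + D[t_2,\ell_2,\s_2]\bigr)$. Iterating over all pairs $(\s_1,\s_2)$ is where the factor roughly squares: $O((k+1)^{c})\times O((k+1)^{c})$ pairs, each processed in $O(c)$ time, giving $O((k+1)^{2c+O(1)})$ per union node.

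With $O(n)$ nodes in $\TT$ and $O((k+1)^{2(c+1)})$ work per node (the union nodes dominating), the total running time is $O((k+1)^{2(c+1)} n)$, as claimed. The final answer is read off the root: $\max_{\s} D[\text{root}, k, \s]$ for \textsc{Densest} (and the min for \textsc{Sparsest}). Correctness of each transition is a routine induction on $\TT$: the key invariant is that the stored edge count already accounts for all edges present in $G_t$ among the selected set, and the four operations either add no vertices and no edges (relabel), add no vertices but a predictable number of edges determined by the label counts (join), or partition the vertex set disjointly so that edges, weights and label-counts all add (union), while the introduce node is trivial.

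The main obstacle I expect is not any single transition — each is essentially a bookkeeping exercise — but rather pinning down the running time cleanly: one has to be careful that at a union node the convolution over pairs $(\ell_1,\s_1),(\ell_2,\s_2)$ is bounded correctly (the constraint $\s_1+\s_2=\s$ with $|\s_1|=\ell_1$ means once $\s_1$ and $\ell$ are fixed everything else is determined, so the true cost is closer to $(k+1)^{c+1}$ per target $(\ell,\s)$ times $(k+1)^{c}$ choices of $\s_1$), and that a $c$-expression tree can be assumed to have $O(n)$ nodes (which the excerpt already grants in the hypothesis). A secondary subtlety is handling the weights $w_v$ so that each selected vertex's weight is counted exactly once — this is automatic because a vertex is introduced at exactly one introduce node and thereafter only relabeled, never duplicated — and making sure the $-\infty/+\infty$ sentinels propagate correctly through additions so that infeasible label-profiles never contaminate feasible ones.
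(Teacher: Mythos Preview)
Your proposal is correct and essentially identical to the paper's proof: the paper defines the same DP table $D_t[\ell,s_1,\ldots,s_c]$ (max/min of edge count plus vertex weights over sets with prescribed label profile), handles the four node types exactly as you do (introduce, relabel, join adding $s_is_j$, union via convolution over split profiles), and bounds the table size by $(k+1)^{c+1}$ with the union step giving the squared factor $(k+1)^{2(c+1)}$. The only cosmetic difference is that the paper attributes the DP skeleton to a modification of an existing clique-width algorithm from the literature.
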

\begin{proof}
We give an algorithm based on dynamic programming  by modifying the algorithm proposed in Theorem 1 in \cite{Broersma2013}.

%For simplicity, we call an $i$-labeled vertex in  a $w$-expression tree $T$ an {\em $i$-vertex}.
For a node $t$ in $\TT$, we denote by $G_t=(V_t, E_t)$ the vertex-labeled subgraph of $G$ represented by node $t$.
Then, we define a DP table for each node $t$ which stores $c+1$ positive integers  $\ell$ and $s_1, \ldots, s_c$.

\revised{For each row in a DP table of node $t$, we define $D_t[\ell, s_1,s_2,\ldots,s_c]$ as the sum of the weights of vertices and the number of edges in $G[S]$ induced by $S\subseteq V_t$ such that:  
\begin{itemize}
\item  for every label $i$, $s_i$ is the number of $i$-labeled vertices in $S$
\item $|S|=\sum_{i\in \{1,\ldots, c\}}s_i = \ell$.
%\item the sum of the weights of $i$-labeled vertices in $S$ equals to $w_i$,
%\item $\sum_{v\in V_t\cap S}w_v+|E_t(G[V_t\cap S])|\ge W$.
\end{itemize}
}
For invalid cases, we set $D_t[\ell, s_1,s_2,\ldots,s_c]=-\infty$.
Note that $0\le \ell\le k$ and $0\le s_i\le k$ for each $i$.
The size of a DP table in a node is at most \revised{$ (k+1)^{c+1}$}.
%Each row in a DP table is assigned to $\true$ or $\false$.
%If it is $\true$, there is a set of size $\ell$ satisfying the above conditions according to variables.
In the root node $r$ of a $c$-expression tree, a row maximizing $D_t[k,  s_1,s_2,\ldots,s_c]$ in the DP table indicates an optimal solution.

%\paragraph*{Introduce node $\circ_i$: }
\medskip
\revised{\noindent{\bf Introduce node: }
In an introduce node $t$,  suppose that an $i$-labeled vertex $v$ is introduced.
We define the recursive formula as follows.
\begin{align*}
    D_t[\ell, s_1,s_2,\ldots,s_c] = \begin{cases} 
    w_v & \mbox{if $\ell = 1$, $s_i=1$, and $s_l=0$ for $l\neq i$}\\
    0 & \mbox{if  $\ell =0$ and $s_l=0$ for every $l\in \{1,\ldots, c\}$}\\
    -\infty & \mbox{otherwise}
    \end{cases}.
\end{align*}
}
In the first case, $v$ is contained in $S$, and in the second case $v$ is not contained in $S$.
Note that $G_t$ consists of exactly one $i$-vertex $v$.
%For the remaining case, $c_t[\ell, b_1, \ldots, b_w, h_1, \ldots, h_w] = \false$.
%Otherwise,  $c_t[\ell, b_1, \ldots, b_w, h_1, \ldots, h_w] = \false$.

%\paragraph*{Union node $\cup$: }
\medskip
\noindent{\bf Union node: }
In a union node $t$, let $t_1$ and $t_2$  be its children nodes. 
The graph $G_t$ consists of the disjoint union of $G_{t_1}$ and $G_{t_2}$.
\revised{Thus the densest $\ell$-subgraph of $G_t$ is obtained from the disjoint union of the densest $\ell_1$-subgraph in $G_{t_1}$ and the densest $\ell_2$-subgraph in $G_{t_2}$ such that  $\ell_1 + \ell_2 = \ell$.
Therefore, we define the recursive formula as follows:
\begin{align*}
    D_t[\ell, s_1,s_2,\ldots,s_c]= \max_{\substack{\ell_1+\ell_2=\ell\\ s^1_l+s^2_l=s_l, \forall l\in \{1, \ldots, c\}}} \left(D_{t_1}[\ell_1, s^1_1,s^1_2,\ldots,s^1_c] + D_{t_2}[\ell_2, s^2_1,s^2_2,\ldots,s^2_c]\right).
\end{align*}
}
% Then, $D_t[\ell, s_1,s_2,\ldots,s_c] = \true$ if and only if there is a pair of tuples $(\ell_1, W_1, s^1_1,s^1_2,\ldots,s^1_c)$ in node $t_1$ and $(\ell_2, W_2, s^2_1,s^2_2,\ldots,s^2_c)$  in node $t_2$ such that:
% \begin{itemize}
% \item $D_{t_1}[\ell_1, W_1, s^1_1,s^1_2,\ldots,s^1_c] = \true$,
% \item $D_{t_2}[\ell_2, W_2, s^2_1,s^2_2,\ldots,s^2_c] = \true$,
% \item $\ell_1+\ell_2=\ell$,
% \item $W_1+W_2=W$, and
% \item $s^1_l+s^2_l=s_l$ for every $l\in \{1, \ldots, c\}$.
% \end{itemize}
%For the remaining case, $c_t[\ell, b_1, \ldots, b_w, h_1, \ldots, h_w] = \false$.

%\paragraph*{Join node $\eta_{i,j}$: }
\medskip
\noindent{\bf Join node: }
\revised{In a join node $t$, let $t'$ be its child node.
Because all $i$-labeled vertices and all $j$-labeled vertices are joined by adding all possible edges between them, we define $D_t[\ell, s_1,s_2,\ldots,s_c] =  D_{t'}[\ell, s_1,s_2,\ldots,s_c] + s_is_j$.
% \begin{align*}
%     D_t[\ell, s_1,s_2,\ldots,s_c] =  D_{t'}[\ell, s_1,s_2,\ldots,s_c] + s_is_j.
% \end{align*}
}

%\paragraph*{Relabel node $\rho_{i,j}$: }
\medskip
\noindent{\bf Relabel node: }
In a relabel node $t$, label $i$ is changed to label $j$. After the relabeling, $s_i = 0$ and $s_j = s'_i + s'_j$.
\revised{Thus, if $s_i=0$, $s'_i+s'_j=s_j$, and $s'_l=s_l$ for every $l\in \{1, \ldots, c\}\setminus \{i,j\}$, we set $D_t[\ell, s_1,s_2,\ldots,s_c] =D_{t'}[\ell, s'_1,s'_2,\ldots,s'_c]$, and otherwise $D_t[\ell, s_1,s_2,\ldots,s_c] =  -\infty $.
\begin{align*}
    D_t[\ell, s_1,s_2,\ldots,s_c]=\begin{cases} \max_{\substack{s'_i+s'_j=s_j\\ s'_l=s_l, \forall l\in \{1, \ldots, c\}\setminus \{i,j\}}} D_{t'}[\ell, s'_1,s'_2,\ldots,s'_c] & \mbox{if $s_i=0$} \\
    -\infty & \mbox{otherwise}
    \end{cases}.
\end{align*}
}

%\paragraph*{Time analysis: }
\bigskip
It is easily seen that each DP table can be computed in polynomial time. 
Because the size of a DP table is at most \revised{$(k+1)^{c+1}$}, the total running time is \revised{$O((k+1)^{2(c+1)}n)$}.
\end{proof}

One can compute a $(2^{\cw(G)+1}-1)$-expression tree with $O(n)$ nodes of a graph of clique-width $\cw(G)$ in time $O(n^3)$ \cite{Hlineny2008,Oum2008,Oum2006}.
Thus, \textsc{Densest (Sparsest) $k$-Subgraph with Weighted Vertices} on bounded clique-width graphs can be computed in time \revised{$n^{O(1)}$}.
Therefore, we immediately obtain Theorem \ref{thm:bounded-cw} by Lemma \ref{lem:Deletion_Set}.

\section{FPT Approximation Algorithm}\label{sec:parameter_approx}

In this section, we prove Theorem \ref{thm:twincover:DkS}.
%We give an FPT approximation algorithm by using a more general parameter.
%Furthermore, we improve the approximation ratio by more sophisticated analysis.
Actually, we give a $2$-approximation $2^{\db(G)/2}n^{O(1)}$-time algorithm for \textsc{Densest $k$-Subgraph}.
\begin{theorem}\label{thm:approx:DkS}
Given a block deletion set of size $\db(G)$, there is a $2$-approximation algorithm for \textsc{Densest $k$-Subgraph} that runs in time  $O(2^{\db(G)/2} ( (k^3+\db(G))n + m))$.
\end{theorem}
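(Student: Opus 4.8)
The plan is to run a ``guess the trace on the deletion set'' algorithm, but to split the given block deletion set into two halves so as to keep the number of guesses at $2^{\db(G)/2}$. Write $D = D_1 \cup D_2$ with $|D_i| \le \lceil \db(G)/2 \rceil$, let $B = G[V\setminus D]$ (a block graph), and for an optimal solution $S^*$ set $X^* = S^*\cap D$, $Y^* = S^*\setminus D$, and $X_i^* = X^*\cap D_i$. Counting induced edges by where their endpoints lie gives $\mathrm{OPT} = e(X_1^*) + e(X_2^*) + e(X_1^*,X_2^*) + e(X_1^*,Y^*) + e(X_2^*,Y^*) + e(Y^*)$, where $e(\cdot)$ is the number of internal edges and $e(\cdot,\cdot)$ the number of crossing edges.

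For each $i\in\{1,2\}$ and each $X\subseteq D_i$ the algorithm produces one candidate $k$-set. It first computes $d_X(u) = |N(u)\cap X|$ for every $u\notin D_i$. It sorts $D_{3-i}$ by non-increasing $d_X$ and, for $c=0,1,\dots,|D_{3-i}|$, lets $\pi_c$ be the sum of the $c$ largest $d_X$-values in $D_{3-i}$ (so $\pi_c$ is the largest number of edges into $X$ obtainable from a $c$-subset of $D_{3-i}$). Independently, it puts weight $w_v = d_X(v)$ on each $v\in B$ and invokes Lemma~\ref{lem:DkSWV:block} on $B$ to get, for every $\ell$, the maximum value $\Phi[\ell]$ of an $\ell$-subset of $B$ (internal edges plus vertex weights); this is the only place where $B$ being a block graph is used. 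The candidate value for $(i,X)$ is $e(X) + \max\{\pi_c + \Phi[\ell] : c+\ell = k-|X|\}$, attained by $X$ together with the corresponding $c$ vertices of $D_{3-i}$ and $\ell$ vertices of $B$. The algorithm outputs the largest candidate value over all $i$ and all $X\subseteq D_i$.

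Correctness has two parts. Every candidate is a genuine $k$-vertex set whose computed value merely discards the edges inside the chosen part of $D_{3-i}$ and between that part and $B$, hence the computed value is at most the true number of induced edges; so the output is $\le\mathrm{OPT}$ and the returned set really has at least ``output value'' edges. For the ratio, fix $i$ and examine the guess $X=X_i^*$: taking $c=|X_{3-i}^*|$ gives $\pi_c \ge e(X_i^*,X_{3-i}^*)$, and feeding $Y^*$ into the block-graph DP gives $\Phi[|Y^*|] \ge e(Y^*)+e(X_i^*,Y^*)$; since $|X_i^*|+|X_{3-i}^*|+|Y^*|=k$, this guess witnesses $\mathrm{ALG} \ge e(X_i^*)+e(X_i^*,X_{3-i}^*)+e(X_i^*,Y^*)+e(Y^*)$. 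Summing this for $i=1,2$ yields $2\,\mathrm{ALG} \ge \mathrm{OPT} + e(X_1^*,X_2^*) + e(Y^*) \ge \mathrm{OPT}$, i.e.\ a $2$-approximation. There are $2\cdot 2^{\lceil\db(G)/2\rceil} = O(2^{\db(G)/2})$ guesses; precompute once the block-cut tree and component structure of $B$ in $O(n+m)$ time, and per guess spend $O(\db(G)n+m)$ on the $d_X$-values, the sort, the prefix sums $\pi_c$ and the weights, plus $O(k^3n)$ on Lemma~\ref{lem:DkSWV:block} and $O(k)$ to combine $\pi$ with $\Phi$; altogether $O(2^{\db(G)/2}((k^3+\db(G))n+m))$.

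The delicate point is the cross term $e(X_1^*,X_2^*)$. Completing a guessed $X\subseteq D_i$ only inside $B$ throws these edges away and gives a strictly weaker ratio, yet one cannot afford to also enumerate subsets of $D_{3-i}$. The resolution is that, once $X$ is fixed, the \emph{best} contribution from $D_{3-i}$ toward edges into $X$ is simply a top-$c$ prefix after sorting by $d_X$, needing no further guessing; and although this greedily chosen prefix may carry additional edges among itself and into $B$, silently ignoring them is exactly what keeps the computed value below the true edge count, so it does not hurt the upper bound $\mathrm{ALG}\le\mathrm{OPT}$ while still being enough for $2\,\mathrm{ALG}\ge\mathrm{OPT}$. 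Making this asymmetric bookkeeping serve both directions at once is the crux. (As a consequence, since a twin cover $X$ of $G$ is in particular a block deletion set — $G-X$ is a disjoint union of cliques, hence a block graph — Theorem~\ref{thm:approx:DkS} specializes to Theorem~\ref{thm:twincover:DkS}.)
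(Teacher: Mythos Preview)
Your proof is correct, but it takes a genuinely different route from the paper's. The paper does not enumerate subsets of the deletion set at all in the approximation algorithm; instead it fixes an arbitrary bipartition $(V_1,V_2)$ of $V$ with $V_1\subseteq D$, $|V_1|=\lfloor\db(G)/2\rfloor$, splits the \emph{edge set} into $E'=E(V_1,V_2)$ and $E_1\cup E_2=E(G[V_1])\cup E(G[V_2])$, and then solves \textsc{Densest $k$-Subgraph} \emph{exactly} on each of the two graphs $G'=(V,E')$ and $G''=(V,E_1\cup E_2)$, returning the better solution. Exactness on $G''$ is cheap because $G[V_1]$ has $\lfloor\db(G)/2\rfloor$ vertices and $G[V_2]$ has block deletion number $\lceil\db(G)/2\rceil$, so Theorem~\ref{thm:DkS:db} applies; exactness on $G'$ is cheap because $G'$ is bipartite with side $V_1$, which is therefore a block deletion set of size $\lfloor\db(G)/2\rfloor$. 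The $2$-approximation then follows from the trivial observation that one of $E'$ and $E_1\cup E_2$ contains at least half of $\mathrm{OPT}$'s edges. Your argument, by contrast, keeps the original graph, enumerates $X\subseteq D_i$ for $i\in\{1,2\}$, and recovers the cross term $e(X_1^*,X_2^*)$ via a greedy top-$c$ prefix in the other half together with a deliberate undercount. What the paper's approach buys is modularity: it is a black-box reduction to the exact algorithm and immediately generalizes (as the paper notes) to any $\mathcal C$-deletion set for bounded clique-width $\mathcal C$. What your approach buys is a slightly sharper analysis---you actually prove $2\,\mathrm{ALG}\ge \mathrm{OPT}+e(X_1^*,X_2^*)+e(Y^*)$---and it avoids constructing the two auxiliary graphs, at the price of the asymmetric bookkeeping you flagged.
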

\begin{proof}
\begin{figure}[tbp]
    \centering
    \includegraphics[width=7cm]{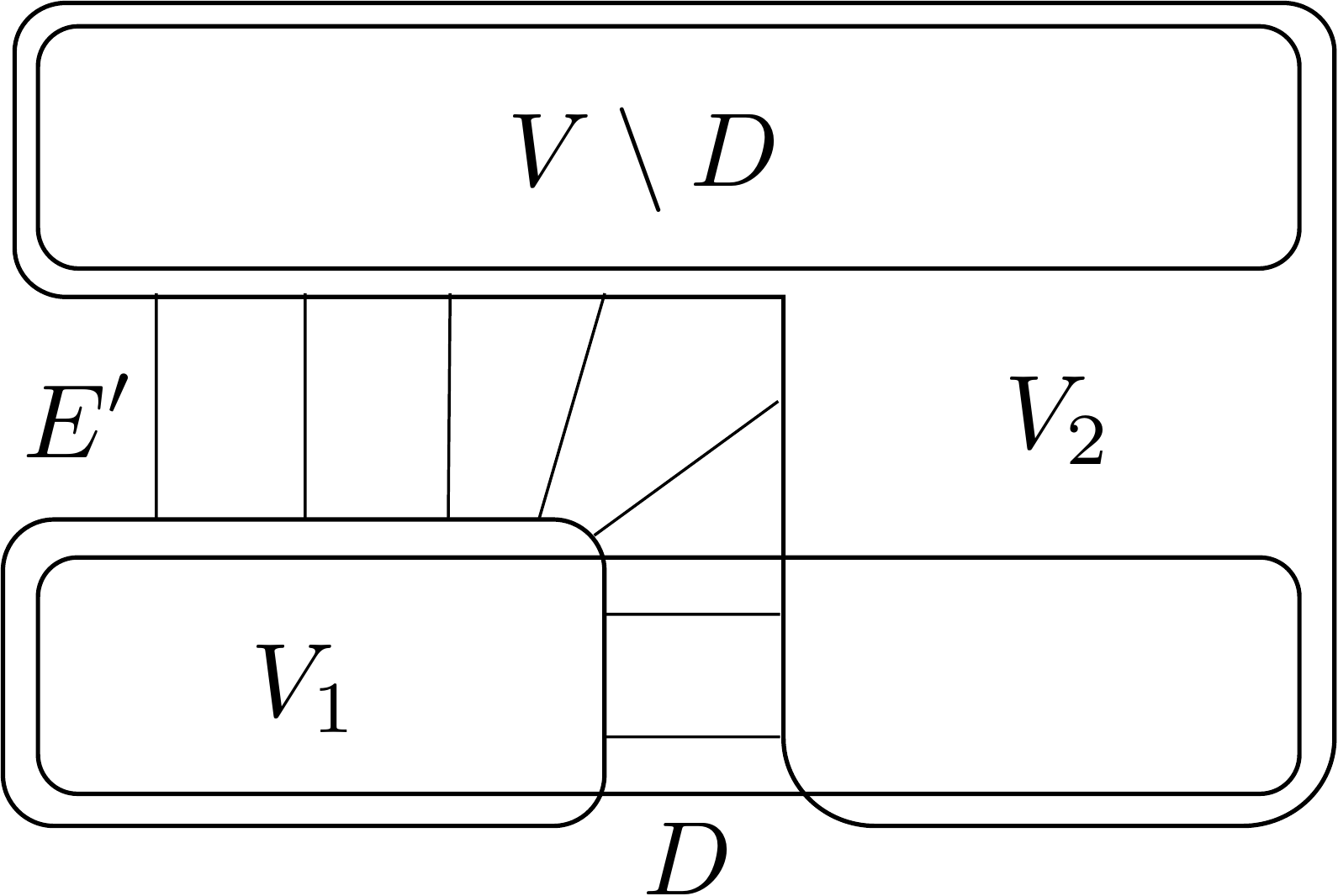}
    \caption{Partition $(V_1, V_2)$ of $V$.}
    \label{fig:partition}
\end{figure}
Our algorithm is based on an FPT approximation algorithm  proposed in \cite{Bourgeois2017}.
\revised{Given a block deletion set  $D$ of size $\db(G)$, let $V_1\subseteq D$ be a subset of $D$ of size $\lfloor \db(G)/2\rfloor$ and let $V_2 = V\setminus V_1$ (see Figure \ref{fig:partition}). Note that $|V_2\cap D|=\lceil \db(G)/2\rceil$}.
Let $E_1$ and $E_2$ be the set of edges in $G[V_1]$ and $G[V_2]$, respectively.
Also, let $E'=E\setminus (E_1\cup E_2)$.

\revised{We solve \textsc{Densest $k$-Subgraph} for two graphs $G'=(V_1\cup V_2, E')$ and $G''=(V_1\cup V_2, E_1\cup E_2)$. Note that $E'$ and $E_1\cup E_2$ are disjoint and $G'$ is bipartite. 
First, we solve \textsc{Densest $k$-Subgraph} in $G''$.  To do this, we solve \textsc{Densest $i$-Subgraph} for every $i\in \{0, \ldots, k\}$ in $G[V_1]$ and $G[V_2]$, respectively.}
Let $S_j^i$ and $s^i_j$ be the optimal solution and its value of \textsc{Densest $i$-Subgraph} in $G[V_j]$ for $j\in \{1,2\}$.
We can observe that an optimal solution in $G''=(V_1\cup V_2, E_1\cup E_2)$ is $S^{i_1}_1\cup S^{i_2}_2$ such that $i_1+i_2=k$ and $s_1^{i_1}+s_2^{i_2}$ is maximized  because 
$G''$ consists of two disjoint graphs $G[V_1]$ and $G[V_2]$.
Also, we solve \textsc{Densest $k$-Subgraph} on $G'$.
Let $S'$ and $S''$ be optimal solutions in $G'$ and $G''$, respectively.
Finally, we output the larger of $S'$ and $S''$.

Since one of $G[S']$ and $G[S'']$  has at least half of the optimal number of edges, the approximation ratio of this algorithm is $2$.

Finally, we estimate the running time.
Because \revised{$|V_1|=\lfloor \db(G)/2\rfloor$} and the block deletion number of  $G[V_2]$ is \revised{$\lceil \db(G)/2\rceil$}, we can compute \textsc{Densest $k$-Subgraph} in $G[V_1]$ and $G[V_2]$ in time $O(2^{\db(G)/2}((k^3+\db(G))n + m))$, respectively.
Thus, $S''$ can be computed in time $O(2^{\db(G)/2} ( (k^3+\db(G))n + m))+O(k)=O(2^{\db(G)/2} ( (k^3+\db(G))n + m))$.
Furthermore, since  $G'$ is a bipartite graph for $V_1$ and $V_2$, $V_1$ is a vertex cover, and hence it is a block deletion set.
Thus,  \textsc{Densest $k$-Subgraph} in $G'$ can be computed in time $O(2^{\db(G)/2} ( (k^3+\db(G))n + m))$.
Consequently, the total running time of this algorithm is  $O(2^{\db(G)/2} ( (k^3+\db(G))n + m))$.
\end{proof}

We can compute a minimum twin cover in time $O(1.2738^{\tc(G)}+\tc(G) n  + m)$ ~\cite{Ganian2015}.
Since $\db(G)\le \tc(G)$ and $1.2738<\sqrt{2}$, we obtain Theorem \ref{thm:twincover:DkS}.

\medskip

\revised{By using Theorem \ref{thm:bounded-cw}, Theorem \ref{thm:approx:DkS} can be generalized to a $\mathcal{C}$-deletion set where $\mathcal{C}$ is a class of bounded clique-width graphs.
\begin{theorem}
Given a $\mathcal{C}$-deletion set $D$, there is a $2$-approximation algorithm for \textsc{Densest $k$-Subgraph} that runs in time  $2^{|D|/2} n^{O(1)}$.
\end{theorem}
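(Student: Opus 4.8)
The plan is to mimic exactly the proof of Theorem \ref{thm:approx:DkS}, replacing the block deletion set with a $\mathcal{C}$-deletion set $D$ and replacing the polynomial-time subroutine on block graphs by the algorithm of Theorem \ref{thm:bounded-cw}. Concretely, partition $D$ into $V_1\subseteq D$ of size $\lfloor|D|/2\rfloor$ and set $V_2=V\setminus V_1$, so that $|V_2\cap D|=\lceil|D|/2\rceil$. As before, let $E_1$ and $E_2$ be the edge sets of $G[V_1]$ and $G[V_2]$, and $E'=E\setminus(E_1\cup E_2)$, and consider the two graphs $G'=(V_1\cup V_2,E')$ and $G''=(V_1\cup V_2,E_1\cup E_2)$. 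Since every edge of $G$ lies in exactly one of $G'$ and $G''$, an optimal $k$-subgraph of $G$ achieves at least half its value in one of the two graphs, which gives the factor-$2$ approximation once we can solve \textsc{Densest $k$-Subgraph} optimally on each.

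The key steps, in order, are: (1) solve \textsc{Densest $i$-Subgraph} on $G[V_1]$ and on $G[V_2]$ for every $i\in\{0,\dots,k\}$, then combine by a trivial one-dimensional DP over $i_1+i_2=k$ to get an optimal solution $S''$ of $G''$, exactly as in Theorem \ref{thm:approx:DkS}; (2) solve \textsc{Densest $k$-Subgraph} optimally on $G'$ to get $S'$; (3) output the larger of $G[S']$ and $G[S'']$. For step (1), $V_1$ and $V_2\cap D$ are $\mathcal{C}$-deletion sets of $G[V_1]$ and $G[V_2]$ respectively (indeed $G[V_1\setminus V_1]$ and $G[V_2\setminus(V_2\cap D)]=G[V\setminus D]$ are in $\mathcal{C}$, noting $V_2\setminus D=V\setminus D$), each of size at most $\lceil|D|/2\rceil$, so Theorem \ref{thm:bounded-cw} solves each instance in time $2^{|D|/2}n^{O(1)}$. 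For step (2), $G'$ is bipartite with parts $V_1$ and $V_2$; hence $V_1$ is a vertex cover of $G'$, and a vertex cover is in particular a $\mathcal{C}$-deletion set (since the empty graph has clique-width at most $c$), of size $\lfloor|D|/2\rfloor$, so Theorem \ref{thm:bounded-cw} again applies with running time $2^{|D|/2}n^{O(1)}$.

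The one place needing a small remark is that Theorem \ref{thm:bounded-cw} as stated takes a $\mathcal{C}$-deletion set as input: here the relevant sets $V_1$, $V_2\cap D$ are directly available as subsets of the given $D$ (with $V_2\cap D = D\setminus V_1$), so no extra computation is required. Summing the three steps yields total running time $2^{|D|/2}n^{O(1)}$, and the correctness of the approximation ratio is identical to Theorem \ref{thm:approx:DkS}: $\mathrm{OPT}(G)\le \mathrm{OPT}(G')+\mathrm{OPT}(G'')\le 2\max\{\mathrm{OPT}(G'),\mathrm{OPT}(G'')\}$, and both $\mathrm{OPT}(G')$ and $\mathrm{OPT}(G'')$ are computed exactly. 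I do not expect a genuine obstacle here — the entire content is already present in Theorem \ref{thm:approx:DkS}; the only thing to be careful about is the bookkeeping that each of the three subinstances really does come equipped with a $\mathcal{C}$-deletion set of size at most $\lceil|D|/2\rceil$, in particular the observation that a vertex cover of $G'$ serves as a $\mathcal{C}$-deletion set.
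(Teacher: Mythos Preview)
Your proposal is correct and follows exactly the approach the paper indicates: it replays the proof of Theorem~\ref{thm:approx:DkS} verbatim while invoking Theorem~\ref{thm:bounded-cw} in place of the block-graph subroutine, which is precisely what the paper means by ``By using Theorem~\ref{thm:bounded-cw}, Theorem~\ref{thm:approx:DkS} can be generalized to a $\mathcal{C}$-deletion set.'' Your bookkeeping that $V_1$, $V_2\cap D$, and the vertex cover $V_1$ of $G'$ each serve as $\mathcal{C}$-deletion sets of size at most $\lceil |D|/2\rceil$ is correct and is the only point needing verification.
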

}

\section{Conclusion}\label{sec:conclusion}
In this paper, we studied the fixed-parameter tractability of \textsc{Densest $k$-Subgraph} by using structural parameters between clique-width and vertex cover.
We showed that \textsc{Densest $k$-Subgraph} is fixed-parameter tractable parameterized by neighborhood diversity, block deletion number, distance-hereditary deletion number, and cograph deletion number, respectively.
These results also hold for \textsc{Sparsest $k$-Subgraph} and \textsc{Maximum $k$-Vertex Cover}.
Moreover, we designed a 2-approximation  $2^{\tc(G)/2}n^{O(1)}$-time algorithm for  \textsc{Densest $k$-Subgraph}.
This improved a 3-approximation $2^{\vc(G)/2}n^{O(1)}$-time algorithm proposed in \cite{Bourgeois2017}.

% Here, we mention that several results in this paper also hold for \textsc{Maximum $k$-Vertex Cover}, which is the problem to find a vertex subset $S$ of size $k$ that maximizes the number of edges such that at least one its endpoint is in $S$.
%  This is because for any graph $G=(V,E)$, $S$ is an optimal solution of \textsc{Maximum $k$-Vertex Cover} in $G$ if and only if $V\setminus S$ is an optimal solution of \textsc{Sparsest $(n-k)$-Subgraph} in $G$ \cite{Bougeret2014}.
%   \begin{theorem}\label{thm:MVC}
% \textsc{Maximum $k$-Vertex Cover} is fixed parameter tractable when parameterized by neighborhood diversity  and block deletion number, respectively.
%   In particular, given a block graph deletion set, it can be computed in time $O(2^{\db(G)}( (k^3+\db(G))n + m))$ where $\db(G)$ is the block deletion number of $G$.
%   \end{theorem}
  
As for future work, it is worth investigating the parameterized complexity for other structural parameters.
\revised{In particular, one of the most notable open questions would be the parameterized complexity for modular-width.}
\revised{Also, it might be interesting to discuss whether there
is a faster algorithm paramerized by neighborhood diversity without going through
quadratic integer program.}
  
% Thus, the NP-hardness follows the one of \textsc{Sparsest $k$-Subgraph}.
% Moreover, if \textsc{Sparsest $k$-Subgraph} is solvable in $G$, \textsc{Maximum $k$-Vertex Cover} is also solvable in $G$.

% Thus, we also obtain  the following corollary.
% \begin{corollary}\label{cor:MVC}
% \textsc{Maximum $k$-Vertex Cover} is fixed parameter tractable when parameterized by neighborhood diversity  and block deletion number, respectively.
%   In particular, given a block graph deletion set, it can be computed in time $O(2^{\db} + (k^2+\db)n + m)$ where $\db$ is the block deletion number of $G$.
%   \end{corollary}

\bibliographystyle{plain}
\bibliography{ref}
\end{document}